\documentclass[letterpaper]{article}

\usepackage{amsmath}
\usepackage{amssymb}
\usepackage{amsthm} 
\usepackage{xparse}
\usepackage{url}
\usepackage{authblk}
\usepackage{natbib}
\usepackage{fullpage}
\usepackage{color}
\usepackage{tikz}
\usetikzlibrary{decorations.pathreplacing}
\newtheorem{lm}{Lemma}
\newtheorem{theorem}[lm]{Theorem}
\newtheorem{lemma}[lm]{Lemma}

\newtheorem{claim}{Claim}
\newtheorem{observation}{Observation}

\newtheorem{definition}[lm]{Definition}

\usepackage[linesnumbered,noend,ruled,noline]{algorithm2e} 

\newcommand{\prof}{\mathbf{s}}
\newcommand{\load}{\ell}

\newcommand{\online}[1]{A(#1)}
\newcommand{\opt}{\text{OPT}}
\newcommand{\cp}{t}
\newcommand{\ve}{\varepsilon}
\newcommand{\high}{h}
\newcommand{\dis}{{enforcer}}
\newcommand{\diss}{{enforcers}}
\newcommand{\bounded}{bounded}
\newcommand{\disSet}{D}
\newcommand{\lastSeg}{\varkappa}
\newcommand{\Prob}{\mathbb{P}}
\newcommand{\CumCost}{Z}

\DeclareMathOperator*{\E}{\mathbb{E}}

\def \ve {\varepsilon}
\def \s {\mathbf{s}}

\begin{document}

\title{Resource-Aware Cost-Sharing Mechanisms with Priors}

\author[1]{Vasilis Gkatzelis\thanks{gkatz@drexel.edu}}
\author[1]{Emmanouil Pountourakis\thanks{manolis@drexel.edu}}
\author[2]{Alkmini Sgouritsa\thanks{alkmini@liv.ac.uk}}

\affil[1]{Drexel University}
\affil[2]{University of Liverpool}
\date{}

\maketitle

\begin{abstract}
In a decentralized system with $m$ machines, we study the selfish scheduling problem where each user strategically chooses which machine to use. Each machine incurs a cost, which is a function of the total load assigned to it, and some cost-sharing mechanism distributes this cost among the machine's users. The users choose a machine aiming to minimize their own share of the cost, so the cost-sharing mechanism induces a game among them. We approach this problem from the perspective of a designer who can select which cost-sharing mechanism to use, aiming to minimize the price of anarchy (PoA) of the induced games.

Recent work introduced the class of \emph{resource-aware} cost-sharing mechanisms, whose decisions can depend on the set of machines in the system, but are oblivious to the total number of users. These mechanisms can guarantee low PoA bounds for instances where the cost functions of the machines are all convex or concave, but can suffer from very high PoA for cost functions that deviate from these families. 

In this paper we show that if we enhance the class of resource-aware mechanisms with some prior information regarding the users, then they can achieve low PoA for a much more general family of cost functions. We first show that, as long as the mechanism knows just two of the participating users, then it can assign special roles to them and ensure a constant PoA. We then extend this idea to settings where the mechanism has access to the probability with which each user is present in the system. For all these instances, we provide a mechanism that achieves an expected PoA that is logarithmic in the expected number of users.
\end{abstract}

\section{Introduction}
In this paper we revisit a classic selfish scheduling problem: in a large decentralized system with a set $M$ of machines and a set $\mathcal{N}$ of registered users, each day some subset of these users enter the system seeking to process some task. Each user assigns their task to one of the machines, generating a cost that depends on the machine's total load, and the cost of each machine is then charged to its users, through some cost-sharing mechanism. The users' goal is to minimize their own share of the cost, so they strategically assign their task to the machine that would yield the smallest cost share. However, their cost share depends on the congestion of each machine, and thus on the strategic choices of all the other users currently in the system, giving rise to a game.

The need to better understand these games and to evaluate the efficiency of their outcomes lies at the heart of Algorithmic Game Theory, and some of the first seminal papers in this literature analyzed the \emph{price of anarchy} (PoA) of such games, i.e., the extent to which the performance of their Nash equilibria approximates the optimal performance. Much of this work, e.g., in congestion games and network formation games, assumed that the users share the cost \emph{equally}, in accordance with the Shapley value cost-sharing mechanism (e,g., see Chapters 18 and 19, respectively, from \cite{NRTV07-}). However, it soon became clear that the equal-sharing policy can lead to highly inefficient outcomes, even in very simple instances~\cite{ADKTWR08}. As a result, subsequent work focused on the design of alternative, more sophisticated, cost-sharing mechanisms, with the goal of reducing the PoA.

The first to study the extent to which a designer can reduce the PoA using improved cost-sharing mechanisms were \citet{CRV10}. One of their main goals was to analyze mechanisms that are \emph{stable} (i.e., guarantee the existence of pure Nash equilibria in the games they induce) and \emph{decentralized} (i.e., have limited information regarding the overall state of the system). Taking the need for decentralization to an extreme, they focused on the class of \emph{oblivious} cost-sharing mechanisms\footnote{Also known as \emph{uniform} mechanisms.}, which decide how to share the cost of each machine among its users without using \emph{any} information regarding the set of other users or machines that are present in the system. After providing a precise characterization of stable mechanisms for network formation games (where the resources that the agents use have \emph{constant} cost-functions), they systematically analyzed their performance. Building on this work, \citet{vFH13} and then \citet{CGS17} considered more general classes of cost functions. Among other results, \citet{vFH13} showed that no oblivious cost-sharing mechanism can guarantee a PoA bound better than linear function in the number of agents, even for instances with concave cost functions. 
Motivated by this limitation of oblivious mechanisms, subsequent work introduced the model of \emph{resource-aware mechanisms}~\cite{CS16,CGS17}. Compared to oblivious mechanisms, resource-aware ones are more informed: their decision regarding how to share the cost of a machine can also depend on the set of other machines that are available in the system. Using this additional information, \citet{CGS17} managed to overcome the limitations of oblivious mechanisms and design resource-aware mechanisms that achieve a constant PoA for convex and concave cost functions. On the negative side, they showed that there exists a class of seemingly simple cost functions for which no resource-aware cost-sharing mechanism can achieve a PoA better than $O(\sqrt{n})$.

These negative results suggest that it may be impossible for resource-aware mechanisms to achieve a constant PoA for interesting cost functions beyond convex and concave. However, although resource-aware mechanisms are more informed than oblivious ones, they are still severely limited in terms of what they know about the users in the system. In this paper we enhance resource-aware mechanisms with some prior information regarding the users in the system, and we show that this is sufficient for us to design cost-sharing mechanisms that achieve low PoA for a very broad class of cost functions.

\subsection{Our Results}\label{sec:results}
Our main results show that, using only a limited amount of prior information regarding the set of users in the system, resource-aware cost-sharing mechanisms can guarantee a low price of anarchy for a very wide class of selfish scheduling problems. 

\paragraph{Cost functions.} In contrast to prior work in cost-sharing mechanisms, which was mostly restricted to either convex or concave cost functions, our positive result applies to a much larger class of functions. Specifically, we consider any instance where the cost functions of the machines satisfy a mild condition regarding how fast they can grow. We call a cost function \emph{\bounded} if it satisfies the condition that $c(\ell+1)/c(\ell)=O(1)$ for all $\ell>0$, i.e., that the relative jump in the cost function can be upper bounded by some constant. Although the class of {\bounded} functions does not capture extreme examples of cost functions such as $c(\ell)=\ell^{\ell}$, where $c(\ell+1)/c(\ell)>\ell$, it captures the vast majority of functions that may characterize the cost incurred by some machine as a function of its load. For example, it includes all polynomial and even exponential cost functions. Note that this class also contains highly complicated functions that may not have a closed-form expression. 

\paragraph{Games with two known users}
We first consider resource-aware mechanisms that are oblivious to the set of users in the system, with the exception of just two users. The main idea behind our proposed mechanism is to assign special roles to these two users, referred to as \emph{\diss}, and carefully incentivize them to enforce an approximately efficient assignment in equilibrium. Using this approach we manage to guarantee a constant PoA for instances with any combination of \bounded\ cost functions.

\vspace{0.05in}
\noindent {\bf Theorem:} For any class of scheduling games with two known users and \bounded\ cost functions, there exists a stable resource-aware cost-sharing mechanism that achieves a constant PoA.
\vspace{0.05in}

\paragraph{Games with stochastic user arrivals.}
We then extend this idea to the case where each user enters the system with some probability $p$, and the mechanism knows $p$ but not the realization. To achieve a good PoA bound for this class of instances we assign the role of the \dis\ to more users, depending on the value of $p$, and guarantee an expected PoA that is at most logarithmic in the expected number of users.

\vspace{0.05in}
\noindent {\bf Theorem:} For any class of scheduling games with i.i.d.\ arrivals\footnote{We also extend this result to hold even if each bidder $i$ arrives with a different probability, $p_i$.} and \bounded\ cost functions,
there exists a stable resource-aware cost-sharing mechanism that achieves an expected PoA of $O(\log(\tilde{n}))$, where $\tilde{n}=p|\mathcal{N}|$ is the expected number of users.
\vspace{0.05in}

\paragraph{Technical obstacles}
Designing efficient cost-sharing mechanisms for such a wide family of instances is quite demanding: the structure of the optimal assignment can change, depending on the actual number, $n$, of users in the system, but the mechanism is oblivious to this number. So, how can the mechanism approximate the optimal solution without knowing $n$? Prior work focused on the case of concave or convex cost functions, and designed mechanisms leveraging the fact that the corresponding optimal assignments are reasonably ``well behaved'': for concave costs there always exists an optimal solution where all the jobs are assigned to a \emph{single} machine, and for convex costs an optimal assignment can be reached using a simple \emph{greedy} solution~(e.g., \cite{CGS17,CGLS20}). However, we cannot expect to find such convenient structural properties when dealing with the vast family of \bounded\ functions, because the optimal assignment can change radically as a function of $n$.

To deal with this fundamental obstacle, we propose a novel solution: rather than trying to implement the optimal assignment in equilibrium, we instead seek to implement a ``well behaved'' alternative assignment implied by an \emph{online algorithm}. This algorithm assigns jobs to machines using a predetermined assignment sequence which is \emph{independent of the total number of jobs, $n$}. We prove that this algorithm has a constant competitive ratio and then carefully design our cost-sharing mechanisms aiming to implement the outcome of this algorithm in equilibrium, thus inheriting a good approximation guarantee. We believe this technique may be of independent interest.

\subsection{Related Work}\label{sec:related}
Our work extends the recent literature that uses resource-aware cost-sharing mechanisms to achieve low PoA in different classes of games. \citet{CS16} were the first to study this family of mechanisms\footnote{In their paper, \citet{CS16} refer to these mechanisms as \emph{universal} instead of resource-aware.}, focusing on the class of network formation games (like \citet{CRV10} did for the family of oblivious mechanisms). Unlike the scheduling games that we study in this paper, network formation games take place over a graph: each agent is associated with a vertex of the graph and needs to use a path connecting that vertex to a designated sink-vertex, and each edge of the graph corresponds to a resource with a constant cost function. One can think of the games in our paper as the special case where the graph has just two vertices (a source and a sink) and several parallel edges: each edge corresponds to one of the machines, and every agent needs to choose one of these edges in order to get from the source to the sink. From this perspective, our games are more restricted in terms of the users' strategy space, but quite more general in terms of the classes of cost functions. \citet{CS16} showed that when the graph is outerplanar, then resource-aware mechanisms can outperform oblivious ones, but they also proved that an analogous separation is not possible for general graphs. In subsequent work, \citet{CGS17} designed resource-aware mechanisms for the same class of scheduling games that we study in this paper, and were able to achieve a constant PoA for instances with convex and concave cost functions. In a recent paper, \citet{CGLS20} extended many of these results to more general graphs, beyond parallel links, including directed acyclic or series parallel graphs with convex or concave cost functions on the edges.

The assumption that the cost-sharing mechanism may have additional prior information regarding the users was also part of the model studied by \citet{CS16} for the case of network formation games. Specifically, rather than assuming that the source vertex of each agent is chosen adversarially, they assumed that it is drawn from a distribution over all vertices. The cost-sharing mechanism is aware of this stochastic process, so they designed a mechanism that leverages this information to achieve a constant PoA. Following-up on this work~\citet{CLS19} extended the constant PoA to also include Bayesian Nash equilibria.

Ensuring that a cost-sharing mechanism is stable can be quite demanding, so characterizations of stable mechanisms can be very useful. Building on the impressive characterization of stable oblivious mechanisms by \citet{CRV10}, \citet{GMW14} provided a characterization for the set of stable oblivious cost-sharing mechanisms. They proved that these mechanisms correspondend to the class of generalized weighted Shapley values. Leveraging this characterization, \citet{GKR16} analyzed this family of cost-sharing protocols and showed that the unweighted Shapley value achieves the optimal price of anarchy guarantees for a large family of network cost-sharing games.

Other papers on the design and analysis of cost-sharing protocols include \citet{HF14}, who focused on capacitated facility location games, \citet{MW13} who considered a utility maximization model, and \citet{hhss21}, who considered a model that imposes some constraints over the portions of the cost that
can be shared among the agents. Also, \citet{HM11} 
studied the performance of several cost-sharing protocols in a setting, where each player can declare a different demand for each resource. 

Finally, there are several other models in which cost-sharing has played a central role. For example, \citet{MS01} focused on participation games, while \citet{M08} and \citet{MR09} studied queueing
games. \citet{CGV17} recently also pointed out some connections between cost-sharing mechanisms and the literature on coordination mechanisms, which started with the work of \citet{CKN09} and led to several papers focusing on decentralized scheduling policies for machine scheduling games \cite{I+09, AJM08, C09, AH12, K13, C+13, CMP14, BIKM14}. Just like the research on cost-sharing mechanisms, most of the work in coordination mechanisms studies how the price of anarchy varies with the choice of local scheduling policies on each machine (i.e., the order in which to process jobs assigned to the same machine).

\section{Preliminaries}\label{sec:prelim}
We analyze the scheduling games that arise in a decentralized system with a set $M=\{1, 2, \dots , m\}$ of $m$ machines and a set $N=\{1, 2, \dots , n\}$ of $n$ users. 
Each user owns a job and needs to schedule it on one of the machines. Each machine $j\in M$ is characterized by a cost function $c_j: \mathbb{N}\to \mathbb{R}$, where $c_j(\ell)$ is the cost that the machine would incur for processing a total of $\ell$ jobs. The cost function satisfies $c_j(0)=0$ and it is non-decreasing. 

The strategy profile, $\prof = (s_1, s_2, \dots , s_{n})$, of a scheduling game is a schedule, where 
$s_i$ corresponds to the machine that player $i$ chooses for her job. We use $S_j(\prof)=\{i\in N:s_i=j\}$ 
to denote the set of players who scheduled their jobs on machine $j$ in profile $\prof$, and 
$\load_j(\prof)=|S_j(\prof)|$ to denote the load on machine $j$ in $\prof$. Therefore, the cost of machine 
$j$ in this schedule is $c_j(\load_j(\prof))$, and the overall generated cost is $C(\prof)=\sum_{j\in M}c_j(\load_j(\prof))$. 
For notational simplicity, apart from $c_j(\load_j(\prof))$, we also use $c_j(\prof)$ to denote the cost of machine $j$ in $\prof$, since $j$'s load is directly implied by $\prof$.

{\bf Cost-Sharing Mechanisms.}
A \emph{cost-sharing mechanism} is a protocol that determines the cost of each agent using a machine. Formally, a cost-sharing mechanism~$\Xi$ defines at each schedule $\prof$ a 
nonnegative cost share $\xi_{ij}(\prof)$ for each $j\in M$ and $i\in S_j(\prof)$. Since the 
machine that $i$ uses, i.e., $s_i$, is implied by $\prof$, we also 
denote this cost share as $\xi_i(\prof)$. In any schedule $\prof$, the cost of each machine $j$ must be fully covered by the agents using it, so $\sum_{i\in S_j(\prof)} \xi_i(\prof)\geq c_j(\load_j(\prof))$. We use $\hat{C}(\prof)=\sum_{j\in M}\sum_{i\in S_j(\prof)} \xi_i(\prof)$ to denote the overall cost suffered by the users in $\prof$. Since the agents pay at least the cost they generate, we have $\hat{C}(\prof)\geq C(\prof)$ for every profile $\prof$. If there exists some profile for which this inequality is strict, i.e., the cost suffered by the users is greater than the cost that they generated, then we say that the cost-sharing mechanism uses \emph{overcharging}.

{\bf Resource-Aware Mechanisms.} In the class of \emph{resource-aware} cost-sharing mechanisms,
the value of the cost-share $\xi_{ij}(\prof)$ for each $i\in S_j(\prof)$ can depend on the set $S_j(\prof)$
of agents using that machine, on the set of machines $M$, and their cost functions, but not on the set 
$N\setminus S_j(\prof)$ of agents using other machines. In this paper we enhance this class of mechanisms with
some prior stochastic information regarding the set $N\setminus S_j(\prof)$, which enriches the set of cost-sharing
functions that we can implement, allowing us to achieve improved performance guarantees.

{\bf Pure Nash Equilibrium (PNE).} A tuple $(N, M, \mathbf{c}, \Xi)$ of a set of agents, a set of machines and their cost functions $\mathbf{c}=(c_j)_{j\in M}$, and a cost-sharing mechanism, defines a scheduling game $G$. The goal of every user in this game is to choose a machine that minimizes her own share of the cost, determined by $\Xi$.
A strategy profile~$\prof$ is a \emph{pure Nash equilibrium} (PNE) of this game $G$ 
if for every player $i\in N$, and every strategy $s'_i\in M$
\begin{equation*}
\xi_i(\prof)~=~\xi_i(s_i, \prof_{-i}) ~~\le~~ \xi_i(s'_i, \prof_{-i}),
\end{equation*}
where $\prof_{-i}$ denotes the profile of strategies for all agents other
than $i$. In other words, in a PNE $\prof$ no agent can decrease her cost share by 
unilaterally deviating from machine $s_i$ to $s'_i$ if all the other agents' 
choices remain fixed. 

{\bf Stability.} In accordance with prior work, we restrict our attention 
to \emph{stable} cost-sharing mechanisms, i.e., ones that induce games possessing at least one PNE.

{\bf Price of Anarchy (PoA).} 
To measure the performance of a cost-sharing mechanism in a given game, $G$, 
we evaluate the total cost $\hat{C}(\prof)$ suffered by the users in the worst equilibrium $\prof$, 
and compare it to the minimum total cost they could suffer. If we let $Eq(G)$ be the set of all PNE of $G$ and $F(G)$ denote the set of all its feasible schedules, then the \emph{price of anarchy} (PoA) of game $G$ is
\[\text{PoA}(G)=\frac{\max_{\prof\in\textrm{Eq($G$)}}\hat{C}(\prof)}{\min_{\prof^*\in F(G)}C(\prof^*)}\, .\]

Rather than evaluating the performance of cost-sharing mechanisms on a single game, we evaluate them on large classes of games. A class of scheduling games, $\mathcal{G}$, is defined by a tuple $({\mathcal N}, {\mathcal C}, \Xi )$, 
which comprises a universe of players $\mathcal N$, a universe of cost functions $\mathcal C$, 
and a cost sharing mechanism $\Xi$. An instance of a scheduling game $G\in \mathcal G$ 
consists of some subset of users $S\subseteq \mathcal N$, a set $M$ of machines with cost 
functions from $\mathcal C$, and the cost sharing mechanism $\Xi$. The \emph{worst-case
price of anarchy} of mechanism $\Xi$ for a class of games $\mathcal{G}$ is then defined as 
$\text{PoA}(\mathcal G)=\sup_{G\in \mathcal G} \text{PoA}(G)$. We also consider
settings where the subset of agents, $S$, is drawn from $\mathcal N$ based on some distribution $P$.
In that case, we evaluate the \emph{expected price of anarchy} of $\Xi$ as 
$$\text{ExpectedPoA}(\mathcal G)=\sup_{M, \mathbf{c}\in \mathcal C^{|M|}}\left\{\E_{S\sim P} [\text{PoA}((S,M, \mathbf{c},\Xi))]\,\right\}.$$ 
In other words, given an adversarial choice of machines $M$ using cost functions from $\mathcal C$, we evaluate the expected PoA over the randomness of $P$ in defining the subset of agents $S$.

{\bf Classes of Cost Functions.} We say that a cost function is \emph{\bounded} if $c(\ell+1)/c(\ell)=O(1)$ for all $\ell>0$. Another class of functions that plays an important role in prior work is that of \emph{capacitated constant} cost functions. That is, functions such that $c(\load) = c$ when 
$\load\leq \cp$ and $c(\load)=\infty$ when $\load>\cp$, for some positive constants $c$ and $\cp$.  Note that, although these cost functions are not \bounded, one of our first results shows that we can achieve a small PoA for them as well, as long as their capacity, $\cp$, is at least 4. Finally, a \emph{4-step} function is a step function whose segments have length at least 4. In other words, the value of a 4-step function does not change more than once within any interval of length 4 in its domain. Note that capacitated constant functions with capacity at least 4 are a special case of a 4-step function. Also, it is easy to verify that for any \bounded\ cost function $c'$, there exists a 4-step function $c$ such that $c(\ell)\geq c'(\ell)$ and $c(\ell)/c'(\ell)=O(1)$ for all $\ell>0$\footnote{To verify this fact, note that given a \bounded\ function $c'$, we can define a 4-step function $c$ such that for every $k\in \mathbb{N}$, if $\ell\in [4k-3, 4k]$ then $c(\ell)=c'(4k)$. Clearly, $c(\ell)\geq c'(\ell)$ for all $\ell>0$. Also, since $c'$ is \bounded, this means that for every $\ell$ we have $c(\ell)/c'(\ell)\leq c'(\ell+4)/c'(\ell)=O(1)$.}. 
This means that we can always approximate a \bounded\ cost function using a 4-step function, so in the rest of the paper we assume that the cost functions are all 4-step functions.

{\bf Global Ordering.} Our mechanisms, as well as many mechanisms in the related work 
(e.g. \cite{Mou99,CGS17,CGLS20}), use a {\em global ordering} $\pi$ over the universe $\mathcal N$ of players in deciding how to distribute the cost. Although the externality of the users in the games that we study is symmetric (e.g., they all cause the same marginal increase in the cost of a machine), the mechanism needs to share the cost unevenly among them to achieve a good PoA\footnote{It is well known that the PoA is linear in the number of agents if we share the cost equally~\cite{ADKTWR08}.}. The global ordering provides a consistent way for the mechanism to differentiate between these users. To ensure that no fairness concerns arise from the asymmetry introduced by these mechanisms, we assume that this global ordering can change periodically in a predetermined way, thus providing a symmetric treatment of the users over time.

\section{Online Scheduling Algorithm}
The main obstacle that resource-aware mechanisms face in approximating the optimal solution is that they do not know the number $n$ of agents that are present in the system. 
Since the optimal solution can change radically as a function of $n$, how can the cost-sharing mechanism try to approximate it without knowing the value of $n$? 

Rather than trying to implement the optimal assignment as an equilibrium, the main idea behind our solution is to instead implement a much more ``well behaved'' allocation that, in turn, closely approximates the cost of the optimal assignment. Specifically, we define an online algorithm, called \textsc{Delayed-OPT}, which sequentially assigns jobs to machines using a predetermined order, without knowing the value of $n$. We show that this algorithm has a constant competitive ratio and then we design cost-sharing mechanisms aiming to implement the outcome of this algorithm in equilibrium.

If $\online{n}$ is the outcome of the online algorithm and $\opt(n)$ is the optimal allocation (i.e. the feasible schedule with the minimum social cost) when the total number of jobs is $n$, then the competitive ratio is equal to $\max_n \{C(\online{n})/C(\opt(n))\}$.
To simplify the description of the \textsc{Delayed-OPT} online algorithm, without loss of generality we normalize the costs functions. That is, all costs are multiplied by the same constant such that the minimum non-zero cost is equal to $1$.For each $k\in \mathbb{N}$, let $a_k=\max\{q\in \mathbb{N} :~ C(\opt(q))<2^k\}$ be the largest 
number of jobs such that the optimal social cost for scheduling these jobs remains less than $2^k$ (Figure~\ref{fig:onlineExample} 
shows two examples for capacitated constant cost functions). 
Using this definition, let $\ell_{jk}^*$ denote the number of jobs assigned to machine $j$ in the optimal allocation when the total number of jobs is $a_k$.

When the $q^\text{th}$ job arrives, the \textsc{Delayed-OPT} finds the smallest value of $k$ such that for some machine $j\in M$ the number of jobs, $\ell_j$, assigned to it so
far is less than $\ell_{jk}^*$. Then, among all such machines, the algorithm assigns this job to
the one that has the smallest index\footnote{We assume that the machines have some arbitrary, but fixed, ordering indicated by their indices.}. The algorithm then increments the value of $\ell_j$ by one and moves on to the next job. A formal description of the \textsc{Delayed-OPT} algorithm is provided as Algorithm~\ref{alg:DelayedOPT}, below, and two examples of the induced assignment are provided in Figure~\ref{fig:onlineExample}.

\begin{figure}[htb]
	\centering
	\begin{tikzpicture}
		
	\draw[line width=1pt, green!40!black] (-0.4,5.4) -- (4.6,5.4) -- (4.6,-0.2) -- (-0.4, -0.2) -- (-0.4,5.4);
	\draw[line width=1pt, green!40!black] (4.6,5.4) -- (11.6,5.4) -- (11.6,-0.2) -- (4.6, -0.2) -- (4.6,5.4);

	\draw (0,0) rectangle (0.6,0.6);
	\draw[line width=1pt] (0,0.6) -- (0.6,0.6);
	\node at (0.3,0.8) {$c_1 =1$};
	\node at (0.3,0.3) {$1$};
		
	\draw (1,0) rectangle (1.6,1.8);
	\draw[line width=1pt] (1,1.8) -- (1.6,1.8);
	\node at (1.3,2) {$c_2=2$};
	\node at (1.3,0.3) {$2$};
	\node at (1.3,0.9) {$3$};
	\node at (1.3,1.5) {$4$};

	\draw (2,0) rectangle (2.6,3.6);
	\draw[line width=1pt] (2,3.6) -- (2.6,3.6);
	\node at (2.3,3.8) {$c_3=15$};
	\node at (2.3,0.3) {$5$};
	\node at (2.3,0.9) {$6$};
	\node at (2.3,1.5) {$7$};
	\node at (2.3,2.1) {$8$};
	\node at (2.3,2.7) {$9$};
	\node at (2.3,3.3) {$10$};
	
	\draw[line width=1pt,dashed, green!40!black] (3,0) -- (3,4);
	
	\node at (3.8,0.5) {$a_0=0$};
	\node at (3.8,1.1) {$a_1=1$};
	\node at (3.8,1.7) {$a_2=4$};
	\node at (3.8,2.3) {$a_3=4$};
	\node at (3.8,2.9) {$a_4=6$};
	\node at (3.8,3.5) {$a_5=10$};
	
	\node at (2.1,-1) {(a)};

	\draw (5,0) rectangle (5.6,0.6);
	\draw[line width=1pt] (5,0.6) -- (5.6,0.6);
	\node at (5.3,0.8) {$c_1 =1$};
	\node at (5.3,0.3) {$1$};
	
	\draw (6,0) rectangle (6.6,1.2);
	\draw[line width=1pt] (6,1.2) -- (6.6,1.2);
	\node at (6.3,1.4) {$c_2 =2$};
	\node at (6.3,0.3) {$2$};
	\node at (6.3,0.9) {$3$};
	
	\draw (7,0) rectangle (7.6,1.2);
	\draw[line width=1pt] (7,1.2) -- (7.6,1.2);
	\node at (7.3,1.4) {$c_3 =2$};
	\node at (7.3,0.3) {$12$};
	\node at (7.3,0.9) {$13$};

	\draw (8,0) rectangle (8.6,1.2);
	\draw[line width=1pt] (8,1.2) -- (8.6,1.2);
	\node at (8.3,1.4) {$c_4 =2$};
	\node at (8.3,0.3) {$14$};
	\node at (8.3,0.9) {$15$};

	\draw (9,0) rectangle (9.6,4.8);
	\draw[line width=1pt] (9,4.8) -- (9.6,4.8);
	\node at (9.3,5) {$c_5=7$};
	\node at (9.3,0.3) {$4$};
	\node at (9.3,0.9) {$5$};
	\node at (9.3,1.5) {$6$};
	\node at (9.3,2.1) {$7$};
	\node at (9.3,2.7) {$8$};
	\node at (9.3,3.3) {$9$};
	\node at (9.3,3.9) {$10$};
	\node at (9.3,4.5) {$11$};
	
	\draw[line width=1pt,dashed, green!40!black] (10,0) -- (10,5.2);
	
	\node at (10.8,0.8) {$a_0=0$};
	\node at (10.8,1.4) {$a_1=1$};
	\node at (10.8,2) {$a_2=3$};
	\node at (10.8,2.6) {$a_3=8$};
	\node at (10.8,3.2) {$a_4=15$};

	\node at (8.1,-1) {(b)};

		\end{tikzpicture}
		\caption{These figures depict machines with capacitated constant cost functions. Figure (a) shows three machines whose cost is $1$, $2$, and $15$ for any load up to $1$, $3$, and $6$, respectively (and the cost becomes infinite for any load beyond that). Similarly, Figure (b) shows five machines whose cost is $1,2,2,2$, and $7$ for any load up to $1,2,2,2$, and $8$, respectively. In both figures, the $a_k$ values are given on the right, and each number inside the machines represents a job with the number indicating the order of their arrival. The figures show how the \textsc{Delayed-OPT} algorithm assigns the jobs to the machines, e.g. the first job is assigned to the first machine in both cases.}
		\label{fig:onlineExample}
\end{figure}
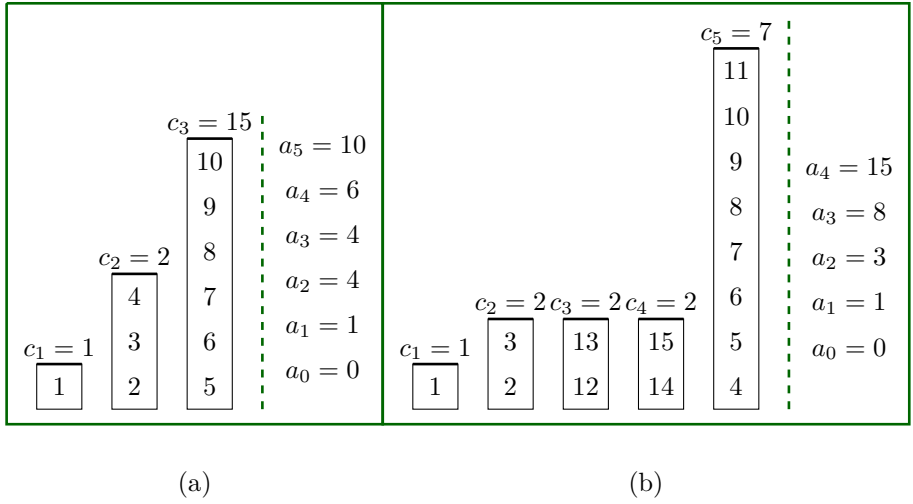

\begin{algorithm}
\DontPrintSemicolon
$q\leftarrow 0$ \tcp*[f]{Initialize counter for the number of jobs}\;
$\ell_j \leftarrow 0$ for each $j\in M$ \tcp*[f]{Initialize all loads to zero}\;
\While{there exist more jobs}
{
	$q\leftarrow q+1$\;
	$k\leftarrow \min \{k\in \mathbb{N}~|~ \exists j\in M: \ell_j < \ell_{jk}^*\}$\;
	$j\leftarrow \arg\min \{j\in M~|~ \ell_j < \ell_{jk}^*\}$\;
	$\ell_j \leftarrow \ell_j +1$ \tcp*[f]{Assign job to first machine that has not reached target load}
}
 \caption{\textsc{Delayed-OPT} Online Algorithm}\label{alg:DelayedOPT}
\end{algorithm}

\begin{lemma}
If $q\leq a_{k'}$ for some $k'\in \mathbb{N}$, then the value of $k$ computed by the algorithm in the iteration corresponding to the $q^\text{th}$ job satisfies $k\leq k'$. 
\end{lemma}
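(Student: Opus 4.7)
My plan is to prove the contrapositive-style observation directly by pigeonhole. The key quantities to track are the total load on all machines just before the $q^\text{th}$ job is assigned, and the total load in the optimal allocation for $a_{k'}$ jobs. Since the algorithm increments $q$ at the start of each iteration and then (later) increments exactly one $\ell_j$, immediately before the $q^\text{th}$ assignment we have $\sum_{j\in M}\ell_j = q-1$.

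Next, by the definition of $\ell_{jk'}^*$ as the load of machine $j$ in $\opt(a_{k'})$, the target loads sum to $\sum_{j\in M}\ell_{jk'}^* = a_{k'}$. Combining these facts with the hypothesis $q\le a_{k'}$ gives
\[
\sum_{j\in M}\ell_j \;=\; q-1 \;<\; a_{k'} \;=\; \sum_{j\in M}\ell_{jk'}^*.
\]
Hence by pigeonhole there must exist at least one machine $j^\star\in M$ for which $\ell_{j^\star} < \ell_{j^\star k'}^*$. This means $k'$ belongs to the set $\{k\in\mathbb{N}\mid\exists j\in M:\ell_j<\ell_{jk}^*\}$ whose minimum is $k$, so $k\le k'$, as required.

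I don't foresee any real obstacle: once one notices that $\sum_j \ell_j$ before the $q^\text{th}$ step equals $q-1$ and that $\sum_j\ell_{jk'}^*=a_{k'}$, the rest is immediate pigeonhole. The only thing to be mildly careful about is the off-by-one (the counter is incremented \emph{before} $k$ is computed, so exactly $q-1$ jobs have been placed), and the fact that $a_{k'}$ is well-defined as a maximum (for very small $k'$ it might be $0$, in which case the hypothesis $q\le a_{k'}$ would force $q=0$, making the claim vacuous).
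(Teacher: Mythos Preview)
Your argument is correct and essentially identical to the paper's own proof: both compare $\sum_{j}\ell_j=q-1$ with $\sum_{j}\ell_{jk'}^*=a_{k'}$ to find a machine $j$ with $\ell_j<\ell_{jk'}^*$. The only cosmetic difference is that the paper phrases it as a proof by contradiction (assuming $\ell_j\ge\ell_{jk'}^*$ for all $j$ and deriving $q>a_{k'}$), whereas you run the same inequality forward and invoke pigeonhole directly.
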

\begin{proof}
Assume that this is not the case. This would mean that in that iteration of the algorithm, for every machine $j\in M$ we have $\ell_j\geq \ell_{jk'}^*$. Summing over all $j\in M$, this would yield 
\begin{equation*}
\sum_{j\in M}\ell_j ~\geq~ \sum_{j\in M}\ell_{jk'}^* ~=~ a_{k'}.
\end{equation*}
But, since $q=\sum_{j\in M}\ell_j +1$, this contradicts the fact that $q\leq a_{k'}$.
\end{proof}

We now proceed to show that the competitive ratio of this algorithm is less than 4.

\begin{theorem}
The competitive ratio of the \textsc{Delayed-OPT} algorithm is less than 4.
\end{theorem}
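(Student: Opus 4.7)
The plan is to compare the cost of the algorithm's final allocation to the optimal cost $C(\opt(n))$ by charging each machine's final load to the ``target'' loads $\ell_{jk}^*$ that triggered its assignments, and then exploiting the geometric growth of the thresholds $2^k$.

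First I would localize $n$ in the sequence of thresholds. Since $C(\opt(\cdot))$ is non-decreasing and every non-zero cost is at least $1$, for any $n \geq 1$ there is a unique $k^* \geq 1$ with $a_{k^*-1} < n \leq a_{k^*}$. The first inequality yields $C(\opt(n)) \geq C(\opt(a_{k^*-1}+1)) \geq 2^{k^*-1}$, giving the lower bound on the denominator of the competitive ratio.

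Next I would bound the final load on each machine by the previous lemma. Since $n \leq a_{k^*}$, every iteration of the algorithm uses some $k \leq k^*$; and when a job is placed on machine $j$ at a $k$-iteration, the load on $j$ strictly before that placement is less than $\ell_{jk}^*$, so the post-placement load is at most $\ell_{jk}^*$. Therefore the final load $\ell_j$ satisfies
\begin{equation*}
\ell_j ~\leq~ \max_{1 \leq k \leq k^*} \ell_{jk}^*,
\end{equation*}
and by monotonicity of $c_j$, also $c_j(\ell_j) \leq \max_{k \leq k^*} c_j(\ell_{jk}^*) \leq \sum_{k=1}^{k^*} c_j(\ell_{jk}^*)$. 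Summing over machines and swapping the order of summation,
\begin{equation*}
C(\online{n}) ~\leq~ \sum_{k=1}^{k^*} \sum_{j \in M} c_j(\ell_{jk}^*) ~=~ \sum_{k=1}^{k^*} C(\opt(a_k)) ~<~ \sum_{k=1}^{k^*} 2^k ~<~ 2^{k^*+1}.
\end{equation*}
Dividing by $C(\opt(n)) \geq 2^{k^*-1}$ yields a ratio strictly smaller than $4$.

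The only subtlety — and the step I would expect to require the most care in a careful write-up — is the non-monotonicity of $\ell_{jk}^*$ in $k$: since the structure of the optimal allocation can reshuffle entirely as $n$ grows, one cannot simply say that the final load on $j$ is bounded by $\ell_{jk^*}^*$. Replacing the $\max$ over $k$ by a sum costs only a factor of $2$ thanks to the geometric growth of the thresholds, which is exactly what gives us enough slack to land strictly below $4$. Everything else (the existence and basic properties of $k^*$, monotonicity of $c_j$, and the bookkeeping that the algorithm never assigns past the target load on a machine) is a straightforward unrolling of the definitions.
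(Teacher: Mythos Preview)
Your proposal is correct and follows essentially the same argument as the paper: both pick the minimal $k^*$ with $n\le a_{k^*}$, use the preceding lemma to bound each machine's final load by $\max_{k\le k^*}\ell_{jk}^*$, replace the max by a sum to get $C(\online{n})<2^{k^*+1}$, and combine this with $C(\opt(n))\ge 2^{k^*-1}$. Your write-up is in fact a bit more explicit about the intermediate step from $\max$ to $\sum$ and about why the non-monotonicity of $\ell_{jk}^*$ in $k$ forces that replacement, but the structure is identical.
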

\begin{proof}
Let $k\in \mathbb{N}$ be the minimum value such that $n\leq a_{k}$. The load that the algorithm assigns on any machine $j$ is no more than $\max_{k'\leq k}\{\load_{jk'}^*\}$. 
As a result, the cost of the \textsc{Delayed-OPT} algorithm for $n$ jobs is
\begin{equation*}
C(\online{n})\leq \sum_{k'\leq k} C(\opt(a_{k'})) < \sum_{k'\leq k} 2^{k'} < 2^{k+1},
\end{equation*}
while the optimal cost is $C(\opt(n))\geq C(\opt(a_{k-1}+1)) \geq 2^{k-1}$, leading to a competitive
ratio of less than $2^{k+1}/2^{k-1}=4$. 
\end{proof}

The following lemma will be useful in the next sections.

\begin{lemma}
\label{lem:fullSegmentFirst}
If $c_j(\ell)=c_j(\ell+\ell')$ for some machine $j$ and some loads $\ell,\ell'>0$, then right after the iteration that the \textsc{Delayed-OPT} algorithm assigns the $\ell^\text{th}$ job at machine $j$, it assigns the next $\ell'$ jobs at the same machine.
\end{lemma}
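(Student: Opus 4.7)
The plan is to first isolate a structural property of the target loads $\ell_{jk}^*$: whenever $c_j$ is flat on $[\ell,\ell+\ell']$, none of these targets lies in $[\ell,\ell+\ell'-1]$. Once this is established, I would combine it with the greedy, lexicographic nature of \textsc{Delayed-OPT} to show that once the $\ell$-th job has landed on $j$, the algorithm has no incentive to deviate from $j$ for the next $\ell'$ iterations.

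For the structural property, I would argue by contradiction using the \emph{maximality} built into $a_k = \max\{q : C(\opt(q)) < 2^k\}$. Suppose that $\ell_{jk}^* \in [\ell, \ell+\ell'-1]$ for some $k$. Take the optimal allocation that attains $C(\opt(a_k))$ and increment $j$'s load by one, so that it becomes $\ell_{jk}^* + 1 \in [\ell+1, \ell+\ell']$. Since $c_j$ is non-decreasing and $c_j(\ell) = c_j(\ell+\ell')$, the function $c_j$ is constant on $[\ell,\ell+\ell']$, so the cost of $j$ is unchanged and no other load is touched. This produces a feasible schedule for $a_k+1$ jobs whose total cost still equals $C(\opt(a_k)) < 2^k$, contradicting the maximality of $a_k$. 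Hence if $\ell_{jk}^* \geq \ell$ then in fact $\ell_{jk}^* \geq \ell+\ell'$.

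For the algorithmic step, let $k^*$ be the value of $k$ chosen by the algorithm when it places the $\ell$-th job on $j$. At that moment $\ell_j = \ell-1 < \ell_{jk^*}^*$, so $\ell_{jk^*}^* \geq \ell$, which the structural property upgrades to $\ell_{jk^*}^* \geq \ell+\ell'$. I would then exploit two elementary invariants of \textsc{Delayed-OPT}: (i) the chosen $k$ is non-decreasing across iterations, because once every machine meets its target for some $k_0$ this remains true forever, so the set of eligible $k$'s can only shrink; and (ii) between consecutive iterations only a single load is incremented. By induction on $r = 0,1,\dots,\ell'-1$: after the $(\ell+r)$-th job is placed on $j$, we have $\ell_j = \ell+r < \ell+\ell' \leq \ell_{jk^*}^*$, so $j$ is still eligible at $k=k^*$; and no machine with a smaller index has newly become eligible, since none of their loads have changed and each of them was already ineligible at $k^*$ when $j$ was chosen. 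Thus the minimum eligible $k$ is still $k^*$, $j$ is still the smallest-indexed eligible machine, and the $(\ell+r+1)$-th job is placed on $j$, closing the induction.

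\textbf{Main obstacle.} The hard part is the structural claim: it relies crucially on the maximality built into the definition of $a_k$ rather than on any closed-form description of $\opt$, and the argument must exploit the fact that within a flat segment one can add a job to $j$ ``for free'' and thereby violate the definition of $a_k$. Once that is in hand, the algorithmic step reduces to bookkeeping about the lexicographic tie-breaking rule.
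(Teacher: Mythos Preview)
Your proposal is correct and follows essentially the same route as the paper: both use the maximality of $a_k$ to deduce that $\ell_{jk^*}^* \geq \ell+\ell'$ (otherwise one could add a job to $j$ in $\opt(a_k)$ at no extra cost, contradicting the definition of $a_k$), and then argue by induction that the algorithm keeps selecting the pair $(k^*,j)$ because only $j$'s load has changed since the $\ell$-th job was placed. Your write-up is a bit more explicit about the invariants (the non-decreasing chosen $k$, the single-load increment per iteration) than the paper's, but the argument is identical in substance.
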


\begin{proof}
Suppose that in the iteration that the \textsc{Delayed-OPT} algorithm assigns the $\ell^\text{th}$ job at machine $j$ it computes $k$ to be the smallest value such that
there exists a machine $j'$ with $\ell_{j'}<\ell_{j'k}^*$. Since the algorithm assigns the current job to machine $j$, at this iteration $\ell_{j}<\ell_{jk}^*$ and $j$ has the smallest index among machines that satisfy this inequality. 

Moreover, since the cost functions are all non-decreasing, the cost of machine $j$ is the same for all loads between $\ell$ and $\ell+\ell'$, which means that $\ell_{jk}^* \geq \ell+\ell'$. To better see this suppose on the contrary that $\ell_{jk}^*<\ell+\ell'$. The allocation that assigns another job to machine $j$ has the same cost with current optimal allocation, i.e. $C(\opt(a_k))=C(\opt(a_k +1))$.  This is a contradiction to the definition of $a_k$ that needs to satisfy that $C(\opt(a_k))<C(\opt(a_k +1))$.

Overall, in the next iteration, $\ell_{j}<\ell_{jk}^*$ and $j$ should be the smallest index that satisfies this inequality, otherwise this wouldn't be true in the previous iteration. Therefore, the \textsc{Delayed-OPT} algorithm assigns the next job to machine $j$ and by induction it should assign all the following jobs until the load of machine $j$ becomes $\ell+\ell'$. Figure~\ref{fig:onlineExample} shows such examples.
\end{proof}

\section{Resource-Aware Mechanism for Games with Two Known Users}\label{sec:guar}

In this section we consider resource-aware mechanisms that are oblivious to the set of users in the system, with the exception of just two users. 
Formally, we consider classes of games such that for every game $G$ in this class, the set of agents, $S$, always contains two known agents. Note that the set $S$ is otherwise totally unrestricted and can also contain an adversarially chosen subset of the agents from ${\mathcal N}$, so this class of games is quite general. In fact, since the optimal allocation may very heavily depend on the total number of agents that participate in the game, the aforementioned restriction is seemingly benign. In what follows, we propose a resource-aware mechanism that assigns a special role to the two known agents, leading to very efficient equilibria for any \bounded\ cost function. In fact we show that the assignment of every Nash equilibrium in the induced game is the same as the outcome of the \textsc{Delayed-OPT} algorithm when scheduling $|S|$ jobs.

As a warm-up, we first consider the games whose cost functions are drawn from the class of capacitated cost functions, and then we go on to extend our result beyond this class.

\subsection{Warm-up: A Class of Capacitated Constant Functions}
\label{sec:capacitatedGuaranteedDisruptors}
In order to more clearly capture the intuition behind how our proposed mechanism works, we first focus on games whose cost functions are capacitated constant, with a capacity of at least 4. That is, for every machine $j$ we have $c_j(\ell)=c_j$ when $\ell\leq \cp_j$ and $c_j(\ell)=\infty$ otherwise, where $c_j > 0$ and $\cp_j\geq 4$ are constants\footnote{\label{foot:positiveCosts}The assumption of $c_j>0$ for all $j$ is w.l.o.g.\ because if there are machines with zero cost, we may charge everybody with $0$, unless the machine load exceeds its capacity, in which case everybody is charged with infinity. In both the \textsc{Delayed-OPT} algorithm and any Nash equilibrium those machines are firstly occupied up to their capacity and then other machines are used resulting in a PoA equal to the one that ignores those machines.}. Note that these cost functions are actually not bounded, since they jump from some constant to infinity when their capacity is exceeded, so this section also shows that our positive results can even be extended to cost functions beyond the class of bounded ones.

Before presenting our protocol, we make an important observation, that can be derived directly from Lemma~\ref{lem:fullSegmentFirst}, regarding the allocation of the \textsc{Delayed-OPT} algorithm when the machines have capacitated constant cost functions. 

\begin{observation}
\label{obs:machineOrder}
For any instance involving a set $M$ of machines with capacitated constant cost functions, there exists an ordering of the machines in $M$ such that the \textsc{Delayed-OPT} algorithm fills up machine $j$ up to its capacity before assigning any job to any machine $j'$ that is later in the ordering.
\end{observation}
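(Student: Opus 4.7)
The plan is to derive the observation essentially as a direct corollary of Lemma~\ref{lem:fullSegmentFirst}, together with a carefully chosen ordering on the machines.

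First, I would note that for any capacitated constant cost function, $c_j(1)=c_j(\cp_j)=c_j$, so the hypothesis of Lemma~\ref{lem:fullSegmentFirst} applies with $\ell=1$ and $\ell'=\cp_j-1$. Consequently, in the very iteration where \textsc{Delayed-OPT} first assigns any job to machine $j$, the lemma guarantees that the algorithm will continue to assign the subsequent $\cp_j-1$ jobs to the same machine $j$, yielding a contiguous block of $\cp_j$ jobs on $j$ that saturates it to capacity before any further job is assigned elsewhere.

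Next, I would define the ordering by listing machines in the order in which \textsc{Delayed-OPT} first places a job on them (any machine that never receives a job is appended at the end in arbitrary order). By the previous paragraph, between the iteration in which the algorithm first assigns a job to the $k$-th machine $\pi(k)$ in this ordering and the iteration in which it first assigns a job to $\pi(k+1)$, machine $\pi(k)$ must have been filled to its capacity in a single contiguous block. In particular, no intervening job can slip onto any later machine $\pi(k')$ with $k'>k$, which is exactly the property required by the observation.

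There is no real obstacle here: the main thing to verify is that the ``next $\ell'$ jobs'' phrasing in Lemma~\ref{lem:fullSegmentFirst} genuinely forbids the algorithm from interleaving assignments to a later machine while $\pi(k)$ is still being filled, but this is precisely what that lemma asserts, so the argument reduces to the bookkeeping above rather than to any substantive computation.
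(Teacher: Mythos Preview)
Your proposal is correct and matches the paper's own reasoning: the paper simply states that the observation ``can be derived directly from Lemma~\ref{lem:fullSegmentFirst},'' and your argument (applying that lemma with $\ell=1$, $\ell'=\cp_j-1$ and then ordering machines by first-assignment time) is precisely how one makes that derivation explicit.
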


Let $D=\{1,2\}$ be the set of the two agents, called {\em enforcers}, who are guaranteed to participate, and let $R=S\setminus D$ be the rest of the agents, which we call {\em regular} agents. Also, given some set of agents $S'$, let $\high(S')$ be the first (highest priority) agent in $S'$ according to a global ordering $\pi$. For simplicity we assume that the machines are renamed according to the ordering implied by the \textsc{Delayed-OPT} algorithm (Observation~\ref{obs:machineOrder}), and let $\CumCost_j=\sum_{k\leq j}c_k$ be the sum of the costs of the first $j$ machines in this ordering. Note that the value of $\CumCost_j$ is {\em strictly} increasing with $j$ by our convention that $c_j>0$ for all $j$. 
Finally, to define the protocol we also use an arbitrarily small positive value $\ve_j$ for each machine $j$ to be used as a special charge for \diss\ in some cases; $\ve_j$ values are strictly decreasing values, i.e. $\ve_j > \ve_{j+1}$.

{\bf Brief description of the protocol.} The \diss\ are charged with the small value $\ve_j$ for using machine $j$ only in two cases: i) if they are together in $j$ along with at least one regular agent (if there were no regular agent, the \diss\ should cover the cost of the machine) and the load of machine $j$ doesn't exceed its capacity $\cp_j$, ii) if the \dis\ is alone in $j$ and the load of machine $j$ exceeds its capacity $\cp_j$. In any other case they pay $\CumCost_j$. Regarding the regular agents, the highest priority regular agent always pays a non-zero charge. More specifically, if machine $j$'s load doesn't exceed $\cp_j$, the highest priority regular agent pays the cost of $j$, $c_j$, if the machine is full (i.e. its load equals $\cp_j$) and there is no \dis\ in $j$; otherwise, meaning when $j$'s load is less than $\cp_j$ or there is an \dis\ in $j$, the highest priority regular agent pays $\CumCost_j$. The rest of the regular agents are charged with $0$ if $j$'s load doesn't exceed $\cp_j$. If $j$'s load exceeds $\cp_j$, then everybody is charged with infinity.

\paragraph{Protocol.} 
Given a strategy profile $\prof$, the cost share of any \dis\ $i \in \disSet$ for using machine $j$ is
\[
	\xi_i (\prof) =
	\begin{cases}
	\ve_j & \text{if } \load_j(\prof) \leq \cp_j \mbox{ and } \disSet \subset S_j(\prof)\\
	\ve_j & \text{if } \load_j(\prof) > \cp_j \mbox{ and } \disSet \cap S_j(\prof) = \{i\}\\
	\CumCost_j & \text{otherwise.}
	\end{cases}\,
	\]
	
The cost share of any regular agent $i \in R$ for using machine $j$ is
	\[
	\xi_i (\prof) =
	\begin{cases}
	0 & \text{if }   \load_j(\prof) \leq \cp_j \mbox{ and } i \neq \high(S_j(\prof)\cap R)\\
	c_j & \text{if } \load_j(\prof) = \cp_j, \disSet \cap S_j(\prof)= \emptyset  \mbox{ and }  i = \high(S_j(\prof)\cap R)\\
	\CumCost_j & \text{if } \load_j(\prof) = \cp_j, \disSet \cap S_j(\prof)\neq \emptyset \mbox{ and }  i = \high(S_j(\prof)\cap R)\\
	\CumCost_j & \text{if } \load_j(\prof) < \cp_j \mbox{ and }  i = \high(S_j(\prof)\cap R)\\
	\infty & \text{otherwise.}
	\end{cases}\,
	\]

The main idea behind this protocol is that in the equilibrium if agents use some machine, all machines with lower indices should be full (i.e. its load equals its capacity). As we mentioned above, $\CumCost_j$ values are strictly increasing. 
As a result if some agent is charged $\CumCost_j$ in machine $j$, she prefers to deviate to a non-full machine (non-full means that its load is less that its capacity) with smaller index. Such an agent exists when the machine is not full or when an \dis\ is using it. However, there is no such agent when the machine is full with only regular agents, where the importance of \diss\ comes in place as we explain next. We note here that it is crucial to keep the budget balance in full machines without \diss\ so that we do not lose in efficiency too much.

If a machine that is not used by the \textsc{Delayed-OPT} algorithm is full with only regular agents, \diss\ are going to disrupt them and push them to machines with lower indices. The reason is because $\ve_j$ values are decreasing, so  \diss\ prefer to occupy machines with higher indices. So, if an \dis\ deviates to a full machine $j$, the load of that machine will exceed capacity and the \dis\ will be charged $\ve_j$. 

The cases where the charges of the \diss\ are high ($\CumCost_j$) are crucial in order to guarantee stability as we show in Theorem~\ref{lem:disGuaranteedStability}.

\begin{theorem} The PoA for the class of capacitated constant cost functions, assuming two \diss, is constant. 
\label{lem:disGuaranteedPoA}
\end{theorem}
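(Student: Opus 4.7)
The plan is to argue that every pure Nash equilibrium $\prof$ of the induced game produces exactly the load profile that \textsc{Delayed-OPT} produces on $|S|$ jobs, and that the cost-shares at $\prof$ overcharge the actual social cost by at most a factor of $2$. Combined with the $4$-competitiveness of \textsc{Delayed-OPT} proved in the previous section, this yields a constant PoA. Under the machine re-indexing from Observation~\ref{obs:machineOrder}, the target equilibrium structure is: there exists an index $j^{\star}$ such that machines $1,\dots,j^{\star}-1$ are completely full with only regular agents, machine $j^{\star}$ contains both enforcers together with all remaining regulars (so $\load_{j^{\star}}(\prof)\le \cp_{j^{\star}}$), and no machine with index larger than $j^{\star}$ is used.

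I would establish this structure via a sequence of deviation arguments. First, no machine can be overloaded at $\prof$, since any overloaded agent facing an $\infty$ charge could move to some machine with remaining capacity (such a machine must exist whenever the instance is feasible). Second, among the used machines only the largest-indexed one can be non-full: if some used machine $j$ with $j<\max_{j'\in M_u}\{j'\}$ were not full, the high-priority regular on $j$ (who pays $\CumCost_j$) would strictly gain by moving to an empty lower-indexed machine, paying the smaller value $\CumCost_{j'}$; cascading this forces the used indices to form a contiguous prefix $1,\dots,j^{\star}$ with only $j^{\star}$ possibly non-full. Third, because $\ve_j$ is strictly decreasing and the only way an enforcer avoids the $\CumCost_j$-regime is by co-habiting with the other enforcer and at least one regular, both enforcers must sit on the largest used machine $j^{\star}$: a deviation by an enforcer to an empty higher-indexed machine leaves her alone with a charge of $\CumCost_{j'}>\ve_{j^{\star}}$, and a deviation to a full lower-indexed machine triggers an overload where she is the sole enforcer, paying $\ve_{j'}>\ve_{j^{\star}}$.

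With this structure, the cost-shares at $\prof$ are straightforward to sum: on each full machine $j<j^{\star}$ the high-priority regular pays $c_j$ and everyone else pays $0$; on $j^{\star}$ each enforcer pays $\ve_{j^{\star}}$, the high-priority regular pays $\CumCost_{j^{\star}}$ (because an enforcer is present), and the remaining regulars pay $0$. Summing,
\[
\hat{C}(\prof)= \CumCost_{j^{\star}-1}+\CumCost_{j^{\star}}+2\ve_{j^{\star}} < 2\,\CumCost_{j^{\star}}+2\ve_1 = 2\,C(\online{|S|})+o(1),
\]
where I used $C(\online{|S|})=\CumCost_{j^{\star}}$ for capacitated constant costs by Observation~\ref{obs:machineOrder}. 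Combined with the bound $C(\online{|S|})<4\,C(\opt)$, this yields $\hat{C}(\prof)<8\,C(\opt)+o(1)$, hence a constant PoA.

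The main obstacle is the second paragraph: ruling out ``exotic'' equilibria requires showing that in any profile that departs from the target structure there is a strictly improving unilateral move which does not itself put the deviating agent into the $\infty$ or $\CumCost_j$ regime at her destination. Particular care is needed when the deviating regular could become the new high-priority regular at a machine hosting an enforcer---this would move her from the $0$-regime into the $\CumCost$-regime and potentially cancel the gain---so the destination must be chosen to avoid this, typically an empty machine of smaller index, and one must verify that such a target always exists whenever the current profile departs from the prefix-full structure.
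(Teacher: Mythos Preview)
Your overall plan---show every PNE has the \textsc{Delayed-OPT} load profile and then bound overcharging---matches the paper, but the execution of the key step has a genuine gap and the structural claim about enforcer location is false.

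\textbf{The deviation argument is in the wrong direction and misses the crucial case.} In your second paragraph you take a non-full used machine $j$ with $j<r=\max_{j'\in M_u}j'$ and have the high-priority regular on $j$ deviate \emph{downward} to an empty lower-indexed machine. But such a machine need not exist (e.g.\ machines $1,\dots,j-1$ may all be full while $j$ is half-empty and $r>j$ is used), so the deviation you describe is unavailable. The paper instead exhibits a deviation \emph{from $r$}: if $r$ is not full, or is full but contains an enforcer, some agent on $r$ is paying $\CumCost_r$ and strictly prefers machine $j$ where she pays at most $\CumCost_j<\CumCost_r$. The case you never address is when $r$ is full with \emph{only regular} agents: then the high-priority regular on $r$ pays only $c_r$ (not $\CumCost_r$) and has no reason to leave. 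This is precisely where the enforcers earn their name: an enforcer currently on some $j'<r$ is paying at least $\ve_{j'}$, and by jumping to $r$ she overloads it and, as the sole enforcer there, pays $\ve_r<\ve_{j'}$. Your paragraph~2 does not invoke this enforcer move at all, so it cannot rule out equilibria in which a full block of regulars sits on a machine that \textsc{Delayed-OPT} never opens.

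\textbf{The enforcers need not be on $j^\star$.} Your third paragraph argues that an enforcer on $j^\star$ has no profitable deviation; that is a stability statement, not a necessity statement. In fact when the last machine $r$ carries at most two jobs in the \textsc{Delayed-OPT} outcome, both enforcers \emph{cannot} sit on $r$ in any PNE: with $\load_r\le 2$ they would either be alone (one enforcer) or lack a regular companion (two enforcers), and in both cases the protocol charges them $\CumCost_r$, so one of them would overload $r-1$ and pay $\ve_{r-1}$. The paper's stability proof explicitly places the enforcers on $r-1$ in this case. Consequently your overcharging calculation is off: there can be \emph{two} machines where the top regular pays a $\CumCost$ value (machine $r-1$ because enforcers are present, and machine $r$ because it is not full), giving a factor of roughly $3$ rather than $2$. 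This does not affect the constant-PoA conclusion, but your ``unique equilibrium structure'' and the displayed bound $2\CumCost_{j^\star}+2\ve_1$ are incorrect as stated.
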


 \begin{proof}
It is sufficient to show that the social cost of any Nash equilibrium is constant away from the cost induced by the \textsc{Delayed-OPT} algorithm, which in turn is constant away from the cost of the optimal allocation. 

In fact we show that under any pure Nash equilibrium, the allocation is the same with the outcome of the \textsc{Delayed-OPT} algorithm; that is for any used machine $r$, all prior machines $j<r$ are fully used. Then, it is easy to check that, regarding the overcharging, each \dis\ may "cause" some regular agent to pay at most the cost of the outcome of the \textsc{Delayed-OPT} algorithm and each \dis\ itself may pay some arbitrarily small value $\ve_j$.\footnote{There is no Nash equilibrium where the \diss\ are charged more than some $\ve_j$, unless there is no regular agent where we again have the same overcharging.}

For the sake of contradiction suppose that in some Nash equilibrium there exist machines $j<r$ such that machine $r$ is used and machine $j$ is not full. Let $r$ be the largest possible such index.
\begin{itemize}
\item If $r$ is not full, or if it is full and has at least one \dis , there exists an agent paying $\CumCost_r$ and if he deviates to $j$ he should pay at most $\CumCost_j<\CumCost_r$, so he has an incentive to deviate (Figure~\ref{fig:disGuaranteedPoA1}). 
\item If $r$ is full with only regular agents, there exists an \dis\ in an earlier machine $j'<r$ paying at least $\ve_{j'}$. 
That \dis\ has an incentive to deviate to $r$ where he will pay $\ve_r < \ve_{j'}$ (Figure~\ref{fig:disGuaranteedPoA2}).  
\end{itemize}

\begin{figure}[htb]
	\centering
	\begin{tikzpicture}
	

	\draw[line width=1pt, green!40!black] (-1.25,3.2) -- (6.25,3.2) -- (6.25,-0.4) -- (-1.25, -0.4) -- (-1.25,3.2);
	\draw[line width=1pt, green!40!black] (6.25,3.2) -- (11.75,3.2) -- (11.75,-0.4) -- (6.25, -0.4) -- (6.25,3.2);

	\draw (-1,0) rectangle (-0.5,2);
	\draw[line width=1pt] (-1,2) -- (-0.5,2);
	\node at (-0.75,2.2) {$c_{1}$};
	\node at (-0.75,-0.2) {$1$};
	\draw[fill = blue] (-0.75,0.2) circle (3pt);
	\draw[fill = blue] (-0.75,0.6) circle (3pt);
	\draw[fill = blue] (-0.75,1) circle (3pt);
	\draw[fill = blue] (-0.75,1.4) circle (3pt);
	\draw[fill = blue] (-0.75,1.8) circle (3pt);
		
	\draw[fill = black] (0,-0.2) circle (1pt);
	\draw[fill = black] (0.25,-0.2) circle (1pt);
	\draw[fill = black] (0.5,-0.2) circle (1pt);

	\draw (1,0) rectangle (1.5,2.8);
	\draw[line width=1pt] (1,2.8) -- (1.5,2.8);
	\node at (1.25,3) {$c_{j}$};
	\node at (1.25,-0.2) {$j$};
	\draw[fill = blue] (1.25,0.2) circle (3pt);
	\draw[fill = blue] (1.25,0.6) circle (3pt);
	\draw[fill = blue] (1.25,1) circle (3pt);
		
	\draw[fill = black] (2,-0.2) circle (1pt);
	\draw[fill = black] (2.25,-0.2) circle (1pt);
	\draw[fill = black] (2.5,-0.2) circle (1pt);

	\draw (3,0) rectangle (3.5,2.4);
	\draw[line width=1pt] (3,2.4) -- (3.5,2.4);
	\node at (3.25,2.6) {$c_{r}$};
	\node at (3.25,-0.2) {$r$};
	\draw[fill = blue] (3.25,0.2) circle (3pt);
	\draw[fill = blue] (3.25,0.6) circle (3pt);
	\draw[fill = blue] (3.25,1) circle (3pt);
	\draw[fill = blue] (3.25,1.4) circle (3pt);
	\node at (5,0.35) {{\small highest priority}};
	\node at (5,0.05) {{\small regular agent}};
	\draw[line width=1pt,<-, blue] (3.5,0.2) -- (4,0.2);
	\draw[line width=1pt,<-, dashed, blue] (1.5,1.4) -- (3,0.2);	
	\node [rotate=-38] at (2.35,1) {deviates};

	\node at (2.5,-1) {(a)};

	\draw (6.5,0) rectangle (7,2.8);
	\draw[line width=1pt] (6.5,2.8) -- (7,2.8);
	\node at (6.75,3) {$c_{j}$};
	\node at (6.75,-0.2) {$j$};
	\draw[fill = blue] (6.75,0.2) circle (3pt);
	\draw[fill = blue] (6.75,0.6) circle (3pt);
	\draw[fill = blue] (6.75,1) circle (3pt);
		
	\draw[fill = black] (7.5,-0.2) circle (1pt);
	\draw[fill = black] (7.75,-0.2) circle (1pt);
	\draw[fill = black] (8,-0.2) circle (1pt);

	\draw (8.5,0) rectangle (9,2.4);
	\draw[line width=1pt] (8.5,2.4) -- (9,2.4);
	\node at (8.75,2.6) {$c_{r}$};
	\node at (8.75,-0.2) {$r$};
	\draw[fill = blue] (8.75,0.2) circle (3pt);
	\draw[fill = blue] (8.75,0.6) circle (3pt);
	\draw[fill = red] (8.75,1) circle (3pt);
	\draw[fill = blue] (8.75,1.4) circle (3pt);
	\draw[fill = blue] (8.75,1.8) circle (3pt);
	\draw[fill = blue] (8.75,2.2) circle (3pt);
	\node at (10.5,0.35) {{\small highest priority}};
	\node at (10.5,0.05) {{\small regular agent}};
	\draw[line width=1pt,<-, blue] (9,0.2) -- (9.5,0.2);
	\draw[line width=1pt,<-, dashed, blue] (7,1.4) -- (8.5,0.2);	
	\node [rotate=-38] at (7.85,1) {deviates};
	\node at (10.2,1) {{\small \dis}};
	\draw[line width=1pt,<-, red] (9,1) -- (9.5,1);
	\node at (9,-1) {(b)};

		\end{tikzpicture}
		\caption{In this figure we assume that machine $j$ is not full and there exists a non empty machine $r$, with $r>j$ (where $r$ is the maximum such index). If machine $r$ is either not full (a) or has an \dis\ (b), then there is always a regular agent from $r$ that prefers to deviate to $j$.}
		\label{fig:disGuaranteedPoA1}
\end{figure}
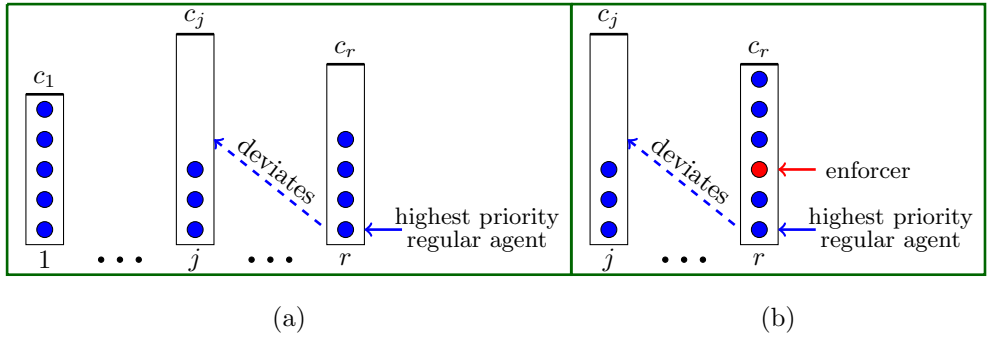

\begin{figure}[htb]
	\centering
	\begin{tikzpicture}

	\draw (-2,0) rectangle (-1.5,2);
	\draw[line width=1pt] (-2,2) -- (-1.5,2);
	\node at (-1.75,2.2) {$c_{1}$};
	\node at (-1.75,-0.2) {$1$};
	\draw[fill = blue] (-1.75,0.2) circle (3pt);
	\draw[fill = blue] (-1.75,0.6) circle (3pt);
	\draw[fill = blue] (-1.75,1) circle (3pt);
	\draw[fill = blue] (-1.75,1.4) circle (3pt);
	\draw[fill = blue] (-1.75,1.8) circle (3pt);
		
	\draw[fill = black] (-1,-0.2) circle (1pt);
	\draw[fill = black] (-0.75,-0.2) circle (1pt);
	\draw[fill = black] (-0.5,-0.2) circle (1pt);
	
	\draw (0,0) rectangle (0.5,2.8);
	\draw[line width=1pt] (0,2.8) -- (0.5,2.8);
	\node at (0.25,3) {$c_{j}$};
	\node at (0.25,-0.2) {$j$};
	\draw[fill = blue] (0.25,0.2) circle (3pt);
	\draw[fill = blue] (0.25,0.6) circle (3pt);
	\draw[fill = blue] (0.25,1) circle (3pt);
		
	\draw[fill = black] (1.5,-0.2) circle (1pt);
	\draw[fill = black] (1.75,-0.2) circle (1pt);
	\draw[fill = black] (2,-0.2) circle (1pt);

	\draw (3,0) rectangle (3.5,2);
	\draw[line width=1pt] (3,2) -- (3.5,2);
	\node at (3.25,2.2) {$c_{j'}$};
	\node at (3.25,-0.2) {$j'$};
	\draw[fill = blue] (3.25,0.2) circle (3pt);
	\draw[fill = blue] (3.25,0.6) circle (3pt);
	\draw[fill = blue] (3.25,1) circle (3pt);
	\draw[fill = red] (3.25,1.4) circle (3pt);
	\draw[fill = blue] (3.25,1.8) circle (3pt);
	
	\node at (1.8,1.4) {{\small \dis}};
	\draw[line width=1pt,->, red] (2.5,1.4) -- (3,1.4);

	\draw[fill = black] (4,-0.2) circle (1pt);
	\draw[fill = black] (4.25,-0.2) circle (1pt);
	\draw[fill = black] (4.5,-0.2) circle (1pt);

	\draw (5,0) rectangle (5.5,2.4);
	\draw[line width=1pt] (5,2.4) -- (5.5,2.4);
	\node at (5.25,2.6) {$c_{r}$};
	\node at (5.25,-0.2) {$r$};
	\draw[fill = blue] (5.25,0.2) circle (3pt);
	\draw[fill = blue] (5.25,0.6) circle (3pt);
	\draw[fill = blue] (5.25,1) circle (3pt);
	\draw[fill = blue] (5.25,1.4) circle (3pt);
	\draw[fill = blue] (5.25,1.8) circle (3pt);
	\draw[fill = blue] (5.25,2.2) circle (3pt);
	\draw[line width=1pt,->, dashed, red] (3.5,1.4) -- (5,2.4);	
	\node [rotate=33] at (4.15,2.1) {deviates};

		\end{tikzpicture}
		\caption{In this figure we assume that machine $j$ is not full and there exists a non empty machine $r$, with $r>j$ (where $r$ is the maximum such index). If machine $r$ is full with only regular agents, then any \dis\ prefers to deviate to $r$.}
		\label{fig:disGuaranteedPoA2}
\end{figure}
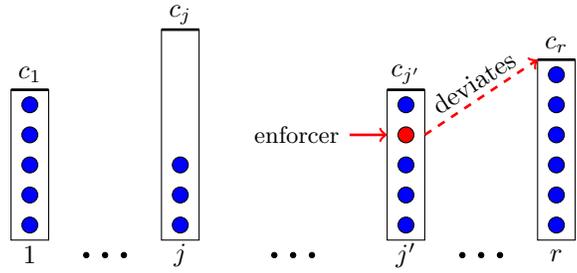

In both cases there exists an agent with an incentive to deviate to another machine which is a contradiction to our assumption that this is a Nash equilibrium.
\end{proof}

\begin{theorem} 
The protocol for the class of capacitated constant cost functions, assuming two \diss, is stable.
\label{lem:disGuaranteedStability}
\end{theorem}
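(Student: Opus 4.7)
The plan is to prove stability by exhibiting an explicit pure Nash equilibrium whose allocation coincides with the output of the \textsc{Delayed-OPT} algorithm. By Observation~\ref{obs:machineOrder}, the Delayed-OPT output fills the machines in a fixed order, so that machines $1,\ldots,\lastSeg-1$ are loaded to capacity and machine $\lastSeg$ carries a residual load $\ell_\lastSeg\leq \cp_\lastSeg$. What remains is to decide where to place the two enforcers and how to order the regulars on the machines they populate, and then to verify the Nash condition directly.

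The candidate profile I would propose splits into two cases. If $\ell_\lastSeg\geq 3$, place both enforcers on machine $\lastSeg$ together with the $\ell_\lastSeg-2$ regulars assigned there; if instead $\ell_\lastSeg\in\{1,2\}$, place both enforcers on machine $\lastSeg-1$ (whose capacity is at least $4$, so it still accommodates $\cp_{\lastSeg-1}-2\geq 2$ regulars alongside the two enforcers) and leave the $\ell_\lastSeg$ regulars alone on $\lastSeg$. In either case the regulars assigned to $\lastSeg$ are chosen to be the \emph{lowest-priority} agents of $R$ according to the global ordering $\pi$. This tie-breaking rule is the central ingredient of the construction: it guarantees that any regular who migrates from a lower-indexed machine onto $\lastSeg$ becomes the highest-priority regular on $\lastSeg$ and is therefore charged $\CumCost_\lastSeg$, rather than slipping in as a non-highest-priority regular who pays $0$.

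Verifying the Nash condition then reduces to a short case analysis over the deviator's type and the target machine. Deviations to any full machine $j<\lastSeg$ push the load above $\cp_j$, charging a regular $\infty$ and a single enforcer $\varepsilon_j>\varepsilon_\lastSeg$ by the strict monotonicity of the $\varepsilon_j$'s, so neither improves. Deviations to an unused machine $j>\lastSeg$ isolate the deviator as either a single enforcer or a sole highest-priority regular with no enforcer, triggering the payment $\CumCost_j$ which dominates every current payment. The only non-obvious deviations are regulars migrating onto the enforcer-containing machine in the first case, or onto the lightly loaded $\lastSeg$ in the second: by the priority rule the migrant becomes the highest-priority regular on her target and pays $\CumCost_\lastSeg$ (respectively $\CumCost_{\lastSeg-1}$), both of which strictly exceed her current payment of at most $c_j$. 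The residual boundary configurations ($\lastSeg=1$ with $n\geq 3$, where both enforcers simply occupy the single used machine, and $n=2$ with $R=\emptyset$, using the informal convention that the two enforcers cover the machine's cost themselves) are checked directly with the same style of argument.

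The main obstacle lies precisely in coordinating the enforcer placement with the priority order on the regulars. If the regulars on $\lastSeg$ were not chosen to be the lowest-priority ones, then a higher-priority regular sitting on a full machine $j<\lastSeg$ and paying $c_j$ could slip onto $\lastSeg$, remain below the existing highest-priority regular there, and pay $0$, breaking equilibrium. Once the priority assignment is aligned with the enforcer placement as above, this loophole is closed and the case analysis goes through, yielding the desired PNE.
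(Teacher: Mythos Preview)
Your proposal is essentially the same as the paper's proof: the same two-case split according to whether the residual load on the last used machine exceeds~$2$, the same placement of the two enforcers (on machine $r$ when $n_r>2$, on machine $r-1$ otherwise), and the same key device of assigning the \emph{lowest}-priority regulars to machine $r$ so that any migrating regular becomes the highest-priority regular there. The verification logic also mirrors the paper's case analysis. One small imprecision: in the second case the highest-priority regular on the enforcer-containing machine $r-1$ currently pays $\CumCost_{r-1}$ (not at most $c_j$), and a migrant to machine $r$ pays $\CumCost_r$ (not $\CumCost_{r-1}$); the inequalities still go through, so this does not affect correctness.
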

\begin{proof}
In order to show stability, we create a strategy profile that is an equilibrium for any set of agents $S\subseteq \mathcal{N}$ as long as $D\subseteq S$.

Let $n$ be the number of agents in the system, where $n-2$ of them are regular agents, since there exist two \diss. Suppose that $r$ machines are occupied based on the \textsc{Delayed-OPT} algorithm, with the first $r-1$ machines being fully occupied and machine $r$ having $n_r\leq \cp_r$ agents. The strategy profile we create depends on the value of $n_r$.

\paragraph{Case of $n_r\leq 2$.} In this case, we create a strategy profile where the \diss\ use the last full machine $r-1$ (unless $r=1$, meaning that there is no regular agent, and the \diss\ use machine $1$ which is a Nash equilibrium). The regular agents are placed according to the outcome of the \textsc{Delayed-OPT} algorithm such that in the last machine $r$ the {\em lowest} priority agents are placed (Figure~\ref{fig:disGuaranteedStability} (a)). 
Next we show that nobody has an incentive to deviate from this strategy profile and therefore it is stable. 

The \diss\ are currently charged with $\ve_{r-1}$ and if they deviate to any previous machine $j$ with $j<r-1$ they will be charged with $\ve_j>\ve_{r-1}$. Moreover, if they unilaterally deviate to $r$ they will be charged $\CumCost_r > \ve_{r-1}$ because they will be the only \dis\ there. Deviating to any other machine $j$ with $j>r$ will result in an even higher charge, since the \dis\ will be alone there. Overall, \diss\ have no incentive to deviate.

From the regular agents' perspective, nobody has an incentive to deviate to a full machine $j'<r$ because its load then will exceed its capacity resulting in infinity charges. Additionally, no agent currently located to some machine $j\leq r$ has an incentive to deviate to an empty machine $j'>r$, because he is currently charged at most $\CumCost_j$ and if he deviates to $j'$, he will be charged $\CumCost_{j'} > \CumCost_j$. The last case to check is if an agent currently located to some machine $j<r$, has an incentive to deviate to $r$.  
Note that if he deviates to $r$, the machine will still not be full and he will be the highest priority agent, as in $r$ we allocated the lowest priority agents; therefore, he will be charged $\CumCost_r > \CumCost_j$, where $\CumCost_j$ is the maximum he may currently be charged. 

\paragraph{Case of $n_r> 2$.}
In this case, we create a strategy profile where the \diss\ use that last machine $r$. The regular agents are placed according to the outcome of the \textsc{Delayed-OPT} algorithm such that in the last machine $r$ the {\em lowest} priority agents are placed (Figure~\ref{fig:disGuaranteedStability} (b)). 
Similar arguments hold in this case in order to show that nobody has an incentive to deviate from this strategy profile. 

More specifically the \diss\ are currently charged with $\ve_{r}$ and any deviation will result in a charge of either $\ve_j$ with $j<r$ or $\CumCost_j$ with $j>r$, which are both strictly greater than $\ve_r$. 

Regarding the regular agents, as before, nobody wants to deviate to a full machine or to a machine $j>r$. Any agent currently using some machine $j<r$ is charged with at most $\CumCost_j$ and if he deviated to machine $r$ he would pay at least $\CumCost_{r}> \CumCost_j$ because he would be the highest priority agent in $r$.
\end{proof}

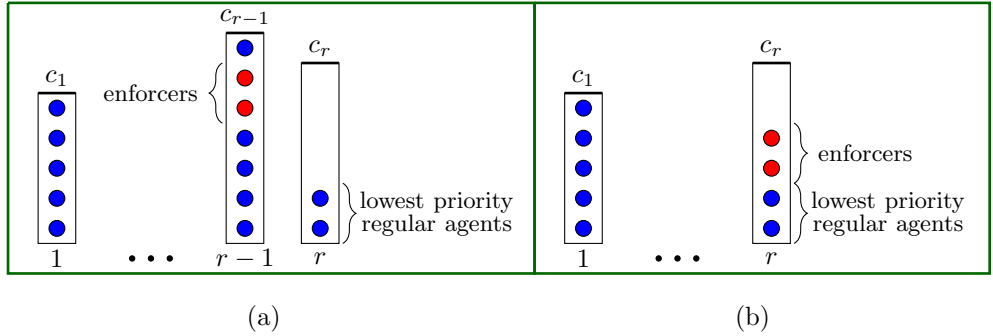
\begin{figure}[htb]
	\centering
	\begin{tikzpicture}
	
	\draw[line width=1pt, green!40!black] (-1.9,3.2) -- (5.1,3.2) -- (5.1,-0.4) -- (-1.9, -0.4) -- (-1.9,3.2);
	\draw[line width=1pt, green!40!black] (5.1,3.2) -- (11.15,3.2) -- (11.15,-0.4) -- (5.1, -0.4) -- (5.1,3.2);
	
	\draw (-1.5,0) rectangle (-1,2);
	\draw[line width=1pt] (-1.5,2) -- (-1,2);
	\node at (-1.25,2.2) {$c_{1}$};
	\node at (-1.25,-0.2) {$1$};
	\draw[fill = blue] (-1.25,0.2) circle (3pt);
	\draw[fill = blue] (-1.25,0.6) circle (3pt);
	\draw[fill = blue] (-1.25,1) circle (3pt);
	\draw[fill = blue] (-1.25,1.4) circle (3pt);
	\draw[fill = blue] (-1.25,1.8) circle (3pt);
		
	\draw[fill = black] (-0.25,-0.2) circle (1pt);
	\draw[fill = black] (0,-0.2) circle (1pt);
	\draw[fill = black] (0.25,-0.2) circle (1pt);

	\draw (1,0) rectangle (1.5,2.8);
	\draw[line width=1pt] (1,2.8) -- (1.5,2.8);
	\node at (1.25,3) {$c_{r-1}$};
	\node at (1.25,-0.2) {$r-1$};
	\draw[fill = blue] (1.25,0.2) circle (3pt);
	\draw[fill = blue] (1.25,0.6) circle (3pt);
	\draw[fill = blue] (1.25,1) circle (3pt);
	\draw[fill = blue] (1.25,1.4) circle (3pt);
	\draw[fill = red] (1.25,1.8) circle (3pt);
	\draw[fill = red] (1.25,2.2) circle (3pt);
	\draw[fill = blue] (1.25,2.6) circle (3pt);
	\draw [decorate,decoration={brace,amplitude=5pt}] (0.95,1.6) -- (0.95,2.4) node [black,midway,xshift=-0.5cm] {};
	\node at (0,2) {{\small \diss}};

	\draw (2,0) rectangle (2.5,2.4);
	\draw[line width=1pt] (2,2.4) -- (2.5,2.4);
	\node at (2.25,2.6) {$c_{r}$};
	\node at (2.25,-0.2) {$r$};
	\draw[fill = blue] (2.25,0.2) circle (3pt);
	\draw[fill = blue] (2.25,0.6) circle (3pt);
	\draw [decorate,decoration={brace,mirror,amplitude=5pt}] (2.55,0) -- (2.55,0.8) node [black,midway,xshift=-0.5cm] {};
	\node at (3.8,0.55) {{\small lowest priority}};
	\node at (3.8,0.25) {{\small regular agents}};

	\node at (1.5,-1) {(a)};

	\draw (5.5,0) rectangle (6,2);
	\draw[line width=1pt] (5.5,2) -- (6,2);
	\node at (5.75,2.2) {$c_{1}$};
	\node at (5.75,-0.2) {$1$};
	\draw[fill = blue] (5.75,0.2) circle (3pt);
	\draw[fill = blue] (5.75,0.6) circle (3pt);
	\draw[fill = blue] (5.75,1) circle (3pt);
	\draw[fill = blue] (5.75,1.4) circle (3pt);
	\draw[fill = blue] (5.75,1.8) circle (3pt);
		
	\draw[fill = black] (6.75,-0.2) circle (1pt);
	\draw[fill = black] (7,-0.2) circle (1pt);
	\draw[fill = black] (7.25,-0.2) circle (1pt);
	
	\draw (8,0) rectangle (8.5,2.4);
	\draw[line width=1pt] (8,2.4) -- (8.5,2.4);
	\node at (8.25,2.6) {$c_{r}$};
	\node at (8.25,-0.2) {$r$};
	\draw[fill = blue] (8.25,0.2) circle (3pt);
	\draw[fill = blue] (8.25,0.6) circle (3pt);
	\draw[fill = red] (8.25,1) circle (3pt);
	\draw[fill = red] (8.25,1.4) circle (3pt);
	\draw [decorate,decoration={brace,mirror,amplitude=5pt}] (8.55,0) -- (8.55,0.8) node [black,midway,xshift=-0.5cm] {};
	\node at (9.8,0.55) {{\small lowest priority}};
	\node at (9.8,0.25) {{\small regular agents}};
	\draw [decorate,decoration={brace,mirror,amplitude=5pt}] (8.55,0.8) -- (8.55,1.6) node [black,midway,xshift=-0.5cm] {};
	\node at (9.5,1.2) {{\small \diss}};
	
	\node at (8,-1) {(b)};

		\end{tikzpicture}
		\caption{This figure shows the stable outcomes when the \textsc{Delayed-OPT} algorithm allocates in the last machine (a) at most two agents and (b) more than two agents.}
		\label{fig:disGuaranteedStability}
\end{figure}

In Appendix~\ref{sec:appendixdisc} we give some intuition on why we may need of at least two \diss\ and the capacities to be at least $4$. Both restrictions are important in order to guarantee stability.

\subsection{Bounded Cost Functions}
\label{sec:genGuaranteedDisruptors}

We now extend the result of Section~\ref{sec:capacitatedGuaranteedDisruptors} to the class of bounded cost functions. For simplicity, we focus on the class of 4-step cost functions which naturally generalize the capacitated cost functions considered above; as we discussed in Section~\ref{sec:prelim}, any bounded cost function can be approximated by a 4-step cost function, so our results directly extend to bounded cost functions as well. 

A key difference between the segments of 4-step functions and the capacitated constant functions is that having a single job in a segment may have two meanings in the respective machines with capacitated constant functions: it may be considered as i) having a single job in the machine with capacitated constant function corresponding to that segment or ii) having an overload in the machine with capacitated constant function corresponding to the previous segment of the step function. In order to overcome this ambiguity we slightly change our protocol in order to handle those two cases consistently and get the same results. 

Let agents $\disSet=\{1,2\}$ be the two agents/\diss\ who are guaranteed to participate, and let $R=S\setminus\disSet$ be the {\em regular} agents. Before describing the protocol we need to give some further definitions; Figure~\ref{fig:GendisGuaranteed} gives some intuition for some of the following definitions.

\subsubsection{Preliminaries}\hspace{5pt}

We first provide an alternative definition of a 4-step function. In the rest of the paper, we will be assuming that all the machine cost functions are 4-step functions.
\begin{definition}
\label{def:stepFunction}
A function $c$ is called {\em 4-step function} if the following are true: there are steps of lengths $\cp(1),\cp(2),\dots\geq 4$ 
such that for all $k$ and all $x\in [ 1+\sum_{k'=1}^{k-1} \cp({k'}) ,\sum_{k'=1}^{k}  \cp({k'})]$ we have that

 $$c(x)= c\left(\sum^{k}_{k'=1} \cp({k'})\right)=\tilde{c}(k).$$

\end{definition}
 That is, the cost function increases only when an extra step needs to be used. If any number of jobs between one and $\cp(1)$ are undertaken by this machine, the cost is $\tilde{c}(1)$. Then if one more job is added the cost jumps to $\tilde{c}(2)$ and then the cost for $\cp(1)+1$ up to $\cp(1)+\cp(2)$ jobs remains $\tilde{c}(2)$, and so forth. Note that trivially all functions on natural numbers are 1-step functions.

{\bf Length and cost of a segment.} According to Definition~\ref{def:stepFunction}, we define segment $k$ of machine $j$ to be the $k^{th}$ step of machine $j$'s cost function $c_j$ and has length $\cp_j(k)$ and cost $\tilde{c}_j(k)$. 

{\bf Last used segment $\lastSeg_j(\ell_j(\prof)) = \lastSeg_j(\prof)$.} For each machine $j$ and profile $\prof$, we denote by $\lastSeg_j(\prof)$ the last segment that is used in machine $j$ under $\prof$. It holds that $c_j(\prof) = \tilde{c}_j(\lastSeg_j(\prof))$.

{\bf Machine's excess $w_j(\ell_j(\prof))=w_j(\prof)$.} For each machine $j$ and profile $\prof$, we denote by $w_j(\prof)$ the number of jobs occupying the last segment of machine $j$ under $\prof$ if that segment is not filled to capacity. If the number of jobs fill the last segment to capacity then we set  $w_j(\prof)$ to $0$. More formally,

\[ 
	w_j (\prof) =
	\begin{cases}
	 \ell_j(\prof) - \sum^{\lastSeg_j(\prof)-1}_{k=1}\cp_j(k)   & \text{ if }  \ell_j(\prof) > \sum^{\lastSeg_j(\prof)-1}_{k=1}\cp_j(k) \\
0 & \text{ otherwise}
	\end{cases}\,
	\]

{\bf Segment order $\phi$.} Lemma~\ref{lem:fullSegmentFirst} implies that the \textsc{Delayed-OPT} algorithm fills up a segment up to its capacity before assigning any job to any other segment.
 Therefore, a priority order on the segments can be derived according to the \textsc{Delayed-OPT} algorithm that assigns for every machine $j$ and segment $k$ a number $\phi_{j}(k)$. The function $\phi$ respects the order of the machine step costs, i.e. it is strictly monotone.

{\bf Definition of $\CumCost_j(k)$.} Similarly to the case of capacitated constant functions, we define $$\CumCost_j(k)=\sum_{j'} \max_{k': \phi_{j'}(k') \leq \phi_{j}(k)} \tilde{c}_{j'}(k'),$$
which is the aggregate cost of all machines if all segments up to $\phi_j(k)$ in the priority order are occupied. 
W.l.o.g. the $\CumCost_j(k)$ values are {\em strictly} increasing according to the order defined by $\phi$. This is by assuming non-zero costs which is w.l.o.g. according to footnote~\ref{foot:positiveCosts}, and additionally if two consecutive segments have the same cost, we may assume that they are merged into a single segment.

{\bf Definition of $\ve_j(k)$.} We also use an arbitrarily small positive value $\ve_j(k)$ for each segment $k$ of machine $j$ to be used as a special charge for \diss\ in some cases; $\ve_j(k)$ values are strictly decreasing values according to the order $\phi$, i.e. if $\phi_j(k) > \phi_{j'}(k')$ then $\ve_j(k) < \ve_{j'}(k')$.

{\bf First two machines.} We further distinguish two machines $1$ and $2$ to be the first and the second machines, respectively, to be used by the \textsc{Delayed-OPT} algorithm.

{\bf Highest priority agents $\high(S')$.} Given some set of agents $S'$, $\high_i(S')$ is the $i^{th}$ agent in $S'$ according to the global ordering $\pi$. 

\paragraph{Protocol.} 
 Given a strategy profile $\prof$ we next define the cost shares of the agents. For simplicity, we drop the dependency on the load and on $\prof$ since there is no ambiguity. The cost share of any \dis\ $i \in \disSet$ using machine $j$ is
\[
	\xi_i (\prof) =
	\begin{cases}
	\ve_{j}(\lastSeg_j) & \text{if } w_j \neq 1  \mbox{ and }  \disSet \subset S_j \\
	\ve_{j}(\lastSeg_j-1) & \text{if } w_j = 1, \disSet \cap S_j = \{i\} \mbox{ and } \lastSeg_j >1\\
	\CumCost_{j}(\lastSeg_j) & \text{otherwise.}
	\end{cases}\,
	\]
	
	The cost share of any regular agent $i \in R$ using machine $j$ is 
	\[
	\xi_i (\prof) =
	\begin{cases}
	c_j & \text{if } w_j=0, \disSet \cap S_j =\emptyset \mbox{ and } i = \high_1(S_j\cap R) \\
	\CumCost_{j}(\lastSeg_j) & \text{if } w_j=0, \disSet \cap S_j \neq \emptyset \mbox{ and } i = \high_1(S_j\cap R)\\
	\CumCost_{j}(\lastSeg_j) & \text{if } w_j=1 \mbox{ and } i \in \{\high_1(S_j\cap R), \high_2(S_j\cap R)\} \\
	\CumCost_{j}(\lastSeg_j) & \text{if } w_j\notin \{0,1\}  \mbox{ and } i = \high_1(S_j\cap R) \\
	0 & \text{otherwise.}
	\end{cases}\,
	\]

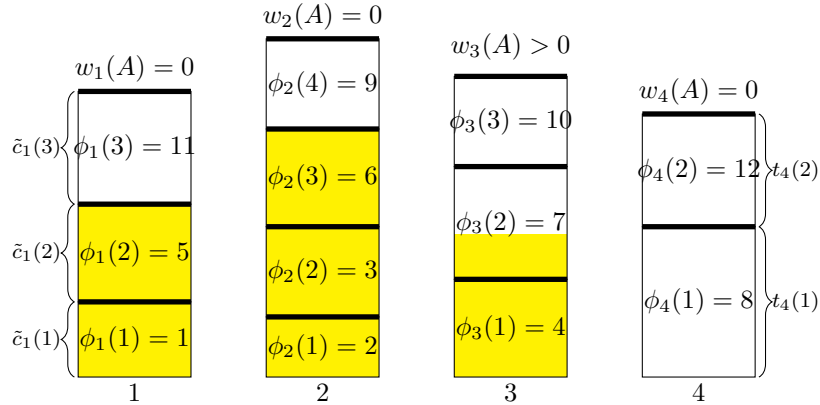
\begin{figure}[htb]
	\centering
	\begin{tikzpicture}
	\draw [fill = yellow] (0.25,0) rectangle (1.75,1);
	\draw [fill = yellow] (0.25,1) rectangle (1.75,2.3);
	\draw (0.25,2.3) rectangle (1.75,3.8);
	\node at (1,4.1) {$w_{1}(A)=0$};
	\draw[line width=2pt] (0.25,1) -- (1.75,1);
	\draw [decorate,decoration={brace,amplitude=5pt}] (0.2,0) -- (0.2,1) node [black,midway,xshift=-0.5cm] {\footnotesize $\tilde{c}_1(1)$};
	\node at (1,0.5) {$\phi_{1}(1)=1$};
	\draw[line width=2pt] (0.25,2.3) -- (1.75,2.3);
	\draw [decorate,decoration={brace,amplitude=5pt}] (0.2,1) -- (0.2,2.3) node [black,midway,xshift=-0.5cm] {\footnotesize $\tilde{c}_1(2)$};
	\node at (1,1.65) {$\phi_{1}(2)=5$};
	\draw[line width=2pt] (0.25,3.8) -- (1.75,3.8);
	\draw [decorate,decoration={brace,amplitude=5pt}] (0.2,2.3) -- (0.2,3.8) node [black,midway,xshift=-0.5cm] {\footnotesize $\tilde{c}_1(3)$};
	\node at (1,3.05) {$\phi_{1}(3)=11$};
	
	\draw [fill = yellow] (2.75,0) rectangle (4.25,0.8);
	\draw [fill = yellow] (2.75,0.8) rectangle (4.25,2);
	\draw [fill = yellow] (2.75,2) rectangle (4.25,3.3);
	\draw (2.75,3.3) rectangle (4.25,4.5);
	\node at (3.5,4.8) {$w_{2}(A)=0$};
	\draw[line width=2pt] (2.75,0.8) -- (4.25,0.8);
	\node at (3.5,0.4) {$\phi_{2}(1)=2$};
	\draw[line width=2pt] (2.75,2) -- (4.25,2);
	\node at (3.5,1.4) {$\phi_{2}(2)=3$};
	\draw[line width=2pt] (2.75,3.3) -- (4.25,3.3);
	\node at (3.5,2.65) {$\phi_{2}(3)=6$};
	\draw[line width=2pt] (2.75,4.5) -- (4.25,4.5);
	\node at (3.5,3.9) {$\phi_{2}(4)=9$};
	
	\draw [fill = yellow] (5.25,0) rectangle (6.75,1.3);
	\fill [yellow] (5.25,1.3) rectangle (6.75,1.9);
	\draw (5.25,1.3) rectangle (6.75,2.8);
	\draw (5.25,2.8) rectangle (6.75,4);
	\node at (6,4.4) {$w_{3}(A)>0$};
	\draw[line width=2pt] (5.25,1.3) -- (6.75,1.3);
	\node at (6,0.65) {$\phi_{3}(1)=4$};
	\draw[line width=2pt] (5.25,2.8) -- (6.75,2.8);
	\node at (6,2.05) {$\phi_{3}(2)=7$};
	\draw[line width=2pt] (5.25,4) -- (6.75,4);
	\node at (6,3.4) {$\phi_{3}(3)=10$};
	
	\draw (7.75,0) rectangle (9.25,2);
	\draw (7.75,2) rectangle (9.25,3.5);
	\node at (8.5,3.8) {$w_{4}(A)=0$};
	\draw[line width=2pt] (7.75,2) -- (9.25,2);
	\node at (8.5,1) {$\phi_{4}(1)=8$};
	\draw[line width=2pt] (7.75,3.5) -- (9.25,3.5);
	\node at (8.5,2.75) {$\phi_{4}(2)=12$};
	\draw [decorate,decoration={brace,mirror,amplitude=5pt}] (9.3,0) -- (9.3,2) node [black,midway,xshift=14pt] {\footnotesize $\cp_4(1)$};
	\draw [decorate,decoration={brace,mirror,amplitude=5pt}] (9.3,2) -- (9.3,3.5) node [black,midway,xshift=14pt] {\footnotesize $\cp_4(2)$};

	\node at (1,-0.2) {$1$};
	\node at (3.5,-0.2) {$2$};
	\node at (6,-0.2) {$3$};
	\node at (8.5,-0.2) {$4$};
	\end{tikzpicture}
	\caption{This figure shows an example of the first four machines in the order that are used by the \textsc{Delayed-OPT} algorithm. The cost functions belong to the class of 4-step functions. In the figure, $\tilde{c}_1(k)$ is the cost of machine $1$ when the $k^{th}$ step/segment is used but not the $(k+1)^{th}$, and $\cp_4(k)$ is the length of the $k^{th}$ step/segment of machine $4$. The $\phi_j(k)$ values show the order that the \textsc{Delayed-OPT} algorithm fills the segments. In this example, the allocation $A$ of the \textsc{Delayed-OPT} algorithm fully uses the first $6$ segments and also part of the $7^{th}$ segment. The excess of all machines but the third are $0$ and machine $3$ has positive excess since its $2^{nd}$ segment is not fully used.}		
	\label{fig:GendisGuaranteed}
\end{figure}

\begin{theorem} The PoA for the class of 4-step cost functions, assuming two \diss, is constant. 
\label{lem:genDisGuaranteedPoA}
\end{theorem}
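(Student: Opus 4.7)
The plan is to show that every pure Nash equilibrium $\prof$ of the induced game produces the same per-machine load as the output $\online{n}$ of the \textsc{Delayed-OPT} algorithm when $n=|S|$, and then to observe that the overcharging injected by the protocol is at most a constant multiple of $C(\online{n})$. Combining this with the competitive-ratio bound already established for \textsc{Delayed-OPT}, we obtain $\hat{C}(\prof)=O(C(\online{n}))=O(C(\opt(n)))$, which is a constant-factor PoA bound.

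To prove the load-matching claim I would argue by contradiction and exhibit a profitable unilateral deviation. Let $\sigma=(r,\lastSeg_r(\prof))$ be the used segment of highest priority $\phi$ in $\prof$. Because $\prof$ distributes the same number of jobs as $\online{n}$ but \textsc{Delayed-OPT} fills segments strictly in $\phi$-order (Lemma~\ref{lem:fullSegmentFirst}), there must exist a segment $\sigma'=(j,k')$ with $\phi_j(k')<\phi_r(\lastSeg_r)$ whose capacity is not exhausted under $\prof$, i.e., $\load_j(\prof)<\sum_{i\le k'}\cp_j(i)$. The key observation driving the deviation argument is that if any agent moves to machine $j$, the new $\lastSeg_j$ is still at most $k'$, so every new $\CumCost$-charge on $j$ is bounded by $\CumCost_j(k')<\CumCost_r(\lastSeg_r)$ by strict monotonicity of $\CumCost$ along $\phi$.

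The analysis then splits into two cases mirroring the capacitated warm-up. In Case A ($w_r\neq 0$, or $w_r=0$ with a \dis\ on $r$) a direct check of the protocol reveals that some regular agent on $r$ is charged $\CumCost_r(\lastSeg_r)$; by the observation above, her move to $j$ strictly decreases her cost, contradicting the Nash property. Case B handles the only remaining configuration, namely $w_r=0$ with no \dis\ on $r$, which is the 4-step analogue of the warm-up's ``full machine with only regular agents'' and precisely where the \diss\ must step in. Migrating a \dis\ to $r$ turns $w_r$ into $1$ and $\lastSeg_r$ into $\lastSeg_r+1$, so the protocol charges her $\ve_r(\lastSeg_r)$. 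Since $\sigma$ is strictly the highest-priority used segment and any \dis\ currently sits on some $j_0\neq r$, her original charge is either some $\CumCost_{j_0}(\lastSeg_{j_0})$ (which dominates any $\ve$) or one of $\ve_{j_0}(\lastSeg_{j_0})$ or $\ve_{j_0}(\lastSeg_{j_0}-1)$ at strictly lower priority than $\phi_r(\lastSeg_r)$; strict monotonicity of $\ve$ along $\phi$ then yields a strictly beneficial deviation, which is all the Nash condition requires for this single \dis.

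Once $\prof$ is pinned down to the \textsc{Delayed-OPT} allocation, the overcharging is easy to bound: at most one machine has $w_j=1$ (the partially-filled ``frontier'' of \textsc{Delayed-OPT}), and the two \diss\ together occupy at most two machines, so the excess payments beyond $C(\prof)=C(\online{n})$ consist of a constant number of $\CumCost$-terms, each at most $C(\online{n})$, plus negligible $\ve$-terms. The main obstacle I anticipate is the bookkeeping in Case B: one must verify that after the move the deviator is indeed the lone \dis\ on $r$ with the expected $w_r$ and $\lastSeg_r$, and that the claimed priority comparison holds in every sub-configuration of how the two \diss\ are initially placed. Everything else is mechanical and parallels the capacitated proof.
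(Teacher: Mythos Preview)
Your proposal is correct and follows essentially the same approach as the paper: argue by contradiction that every PNE has the \textsc{Delayed-OPT} loads, split into the two cases ``regular agent on $r$ paying $\CumCost$ deviates down'' versus ``an \dis\ deviates up to the full regular-only segment $r$ and pays $\ve_r(\lastSeg_r)$'', and then bound the overcharging by a constant number of $\CumCost$-terms plus negligible $\ve$'s. The only cosmetic difference is that the paper first picks $r$ as a machine with $\ell_r(\prof)>\ell_r(A)$ maximizing $\phi_r(\lastSeg_r(\prof))$ and then proves this segment is the globally highest-priority one used in $\prof$, whereas you take that as your definition of $r$; the two choices coincide, and your Case~B bookkeeping (lone \dis, new $w_r=1$, new $\lastSeg_r>1$, and the $\phi$-comparison for all placements of the two \diss) goes through exactly as you outline.
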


 \begin{proof}
 We will show that the total cost of any induced pure Nash equilibrium, assuming {\em two} \diss, is constant away the total cost induced by the \textsc{Delayed-OPT} algorithm which in turns is a constant approximation to the cost of the optimal allocation. In order to show this, we will show that any pure Nash equilibrium $\prof$ has the same allocation with the \textsc{Delayed-OPT} algorithm allocation $A$. This means that for any machine $j$ the load in $\prof$ and $A$ are the same, i.e. $\ell_j(\prof)=\ell_j(A)$.

\begin{claim}
\label{cl:NEequalsA}
For any Nash equilibrium $\prof$, $\ell_j(\prof)=\ell_j(A)$ for all $j$. 
\end{claim}
\begin{proof}
For the sake of contradiction suppose that there exists some Nash equilibrium $\prof$ with different allocation than $A$. 
Then there should be a machine $r$ with $\ell_r(\prof) > \ell_r(A)$. If there are many machines with more load in $\prof$ than in $A$, we choose $r$ to be the one with the maximum $\phi_{r}(\lastSeg_r(\prof))$. 

For any machine $j$, with $\ell_j(\prof) \leq \ell_j(A)$, it holds that the last segment of machine $j$ under $A$ precedes the last segment of machine $r$ under $\prof$ according to segment order $\phi$, i.e. $\phi_{j}(\lastSeg_{j}(A))<\phi_{r}(\lastSeg_r(\prof))$, meaning that overall $\phi_{r}(\lastSeg_r(\prof))$ is the maximum among used segments under $\prof$. The reason is that, if $l$ is the last machine used by the \textsc{Delayed-OPT} algorithm, the excess of all other machines different than $l$ is $0$ under $A$, and therefore if $r\neq l$, $\lastSeg_r(\prof)$ is not used in $A$; by the definition of $\phi$ order, $\phi_{j}(\lastSeg_{j}(A))<\phi_{r}(\lastSeg_r(A)+1) \leq \phi_{r}(\lastSeg_r(\prof))$. If $r=l$, it trivially holds that $\phi_{j}(\lastSeg_{j}(A))<\phi_{r}(\lastSeg_r(A))\leq \phi_{r}(\lastSeg_r(\prof))$.

Next we show that under $\s$ either a regular agent or an \dis\ has an incentive to deviate leading to a contradiction. 

\begin{itemize}
\item If there is at least one \dis\ in machine $r$, or $\lastSeg_{r}(\prof)$ is not full, i.e. $w_r(\prof) \neq 0$, the highest priority regular agent in $r$, $\high_1(S_r(\prof) \cap R)$, is paying $\CumCost_{r}(\lastSeg_r(\prof))$. If this agent deviated to any machine $j$, with $\ell_j(\prof) < \ell_j(A)$ (there exists at least one because $\ell_r(\prof) > \ell_r(A)$), the total load on that machine would be at most $\ell_{j}(A)$ and therefore the agent's payment would be at most  $\CumCost_{j}(\lastSeg_{j}(A))<\CumCost_{r}(\lastSeg_r(\prof))$. 

\item If machine $r$ has only regular agents and $0$ excess, i.e. $w_r(\prof) = 0$, there exists an \dis\ in some machine $j'\neq r$ that is charged with at least $\ve_{j'}(\lastSeg_{j'}(\prof))$. If he deviated to machine $r$, the excess of that machine would become $1$ and he would be the only \dis\ in machine $r$, therefore, he would be charged with $\ve_{r}(\lastSeg_r(\prof)) < \ve_{j'}(\lastSeg_{j'}(\prof))$, where the inequality holds because $\phi_{r}(\lastSeg_r(\prof))$ is the maximum among used segments under $\prof$.
\end{itemize}
\end{proof}
\end{proof}

\begin{theorem} 
The protocol for the class of 4-step cost functions, assuming two \diss, is stable.
\label{lem:genDisGuaranteedStability}
\end{theorem}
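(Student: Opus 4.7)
The plan is to extend the construction of Theorem~\ref{lem:disGuaranteedStability} from capacitated constant functions to 4-step functions. For any $S\supseteq\disSet$, let $A$ denote the \textsc{Delayed-OPT} allocation of $|S|$ jobs, let $l$ be the last machine used by $A$, let $k^{*}=\lastSeg_{l}(A)$, and let $w=w_{l}(A)$. I will construct a strategy profile $\prof$ whose per-machine loads coincide with those of $A$ and verify it is a pure Nash equilibrium. The regular agents are placed according to $A$, with the global ordering $\pi$ used so that the lowest-priority regulars sit on the latest segments of the $\phi$-order. The two enforcers are placed based on $w$, in analogy with the split in the warm-up between the $n_{r}>2$ and $n_{r}\leq 2$ subcases.

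When $w\notin\{1,2\}$, both enforcers occupy segment $k^{*}$ of machine $l$, substituting for two of the jobs that $A$ places there (there is room because either $w\geq 3$ or $w=0$ with $\cp_{l}(k^{*})\geq 4$). Each enforcer then pays $\ve_{l}(k^{*})$, the smallest $\ve$-value among used segments. When $w\in\{1,2\}$, both enforcers instead occupy segment $k'$ of the penultimate machine $l'$ in the $\phi$-order (a fully-filled segment in $A$ of length at least $4$), replacing two regulars so that $\ell_{l'}$ still equals $\ell_{l'}(A)$. Each enforcer then pays $\ve_{l'}(k')$, while the $w$ jobs on $l$'s last segment remain regulars.

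Incentives will be checked separately for the two agent types. No enforcer can improve: a move off their shared machine invokes the ``otherwise'' clause with a strictly larger $\CumCost$ payment; a move into an earlier-$\phi$ filled segment creates overflow where the enforcer is alone with $w=1$ and pays $\ve$ at a strictly earlier-$\phi$ segment (hence strictly larger by $\phi$-monotonicity of $\ve$); and a move to a fresh later-$\phi$ segment yields $\lastSeg=1$, again invoking $\CumCost$. Regular agents' incentives follow the three observations already used in the warm-up: an agent currently paying $0$ is not in $\{\high_{1},\high_{2}\}$ of its machine and any deviation either preserves the $0$ payment or puts the agent in a $\high_{1}/\high_{2}$ role on a later-$\phi$ machine (paying $\CumCost$); the strict monotonicity of $\CumCost_{j}(\lastSeg_{j})$ along $\phi$ ensures any agent already paying some $\CumCost$ cannot find a cheaper destination; and placing the lowest-priority regulars on the latest segments guarantees that deviations into earlier-$\phi$ machines leave the deviator as the top-priority newcomer, triggering the $\CumCost$ clauses rather than rule $5$.

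The main obstacle is the $w\in\{1,2\}$ case, in which the last segment of $l$ retains its regulars without any enforcer, so $\high_{1}(l)$ (and $\high_{2}(l)$ when $w=1$) must pay the large $\CumCost_{l}(k^{*})$ and might seem to benefit by defecting to $l'$. The critical observation is that such a defection spills one job into segment $k'+1$ of $l'$, creating a configuration with $w_{l'}=1$ while both enforcers remain on $l'$; by rule $3$ the top two regulars of $l'$ pay $\CumCost_{l'}(k'+1)$, and a careful priority placement among the regulars of $l$, $l'$, and the machines lying between them in $\phi$ guarantees that any such deviator lands in $\{\high_{1},\high_{2}\}$ on the new $l'$, forcing it to pay $\CumCost_{l'}(k'+1)>\CumCost_{l}(k^{*})$ by $\phi$-monotonicity. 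A symmetric argument blocks $\high_{1}(l'\cap R)$ from defecting profitably to $l$ or elsewhere. Once this priority-assignment step is spelled out, all remaining deviations reduce to those already treated in the warm-up, completing the stability proof.
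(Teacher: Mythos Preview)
Your priority placement for the regular agents is backwards, and this breaks the equilibrium. You place the \emph{lowest}-priority regulars on the latest $\phi$-segments, carrying over the warm-up recipe verbatim. That worked in Section~\ref{sec:capacitatedGuaranteedDisruptors} only because every machine $j<r$ was at hard capacity, so any deviation into it cost $\infty$. With 4-step functions this is no longer true: deviating into a machine $j$ with $w_j(A)=0$ merely spills one job into segment $\lastSeg_j(A)+1$, and the deviator pays $0$ unless she lands in $\{\high_1,\high_2\}$ on $j$. Under your placement she never does. Concretely, the agent on $l$ who is $\high_1(S_l\cap R)$ and pays $\CumCost_l(k^*)$ has \emph{globally low} priority; if she moves to any earlier machine $j$ (which hosts at least two regulars, all with higher priority), she is $\high_3$ or lower, and by rule~5 pays $0$. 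This is a profitable deviation, so your profile is not a PNE. Your sentence ``deviations into earlier-$\phi$ machines leave the deviator as the top-priority newcomer'' asserts exactly the opposite of what your placement produces.

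The paper fixes this by inverting the rule: the regulars who occupy the $\high_1$ slot on each machine are chosen to be the \emph{globally highest}-priority regulars, with the lowest among those placed on $r$. Then any non-zero-paying regular who deviates to some $j\neq r$ is outranked on $j$ only by the single current $\high_1(S_j\cap R)$, so she becomes $\high_2$ and, with the new excess $w_j=1$, pays $\CumCost_j(\lastSeg_j(A)+1)>\CumCost_r(\lastSeg_r(A))$. Your later paragraph about ``a careful priority placement among the regulars of $l,l',\ldots$'' gestures toward something like this, but it contradicts the placement you actually specified, and it has to hold for deviations to \emph{every} earlier machine, not just $l'$. Your case split $w\in\{1,2\}$ versus $w\notin\{1,2\}$ differs slightly from the paper's ($w_r=1$ or $\ell_r\le 2$), but that discrepancy is harmless; the substantive gap is the priority assignment.
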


Due to space limitations we refer the reader to the appendix for the proof of the theorem.

\section{Resource-Aware Mechanism for Games with Stochastic Arrivals}\label{sec:stoc}

In this section we study the case of stochastic arrivals, where each agent $i$ appears in the system with probability $p_i$ and the mechanism has access to $\mathbf{p}=(p_1,p_2,\dots,p_{|\mathcal{N}|})$. Let $S$ be the random set of the arriving agents and $M$ be a set of machines whose cost 
functions are from the class of 4-step functions. We design a cost sharing scheme with the goal of minimizing the expected price of anarchy defined as follows:

\[\text{ExpectedPoA}(\mathcal G)=\sup_{M, \mathbf{c}\in \mathcal C^{|M|}}\left\{\E_{S\sim P} [\text{PoA}((S,M, \mathbf{c},\Xi))]\,\right\}.\]

Our main theorem (Theorem~\ref{thm:iidpoa}) bounds the expected price of anarchy of our protocol in relation to the expected number of arriving agents
$\tilde{n}=\E_{S\sim\mathbf{p}}[|S|]$. 
For the sake of simplicity in this section we prove the case of identical agents that is $p_i=p$ for all $i$. The proof for the general case (Theorem~\ref{thm:indpoa}) can be found in the appendix.
We show that 
for the case of independently arriving agents there exists a protocol using $3+\lfloor 3\frac{\log({p |\mathcal N|})}{-\log (1-p)}\rfloor$ \diss\
that achieves an expected price of anarchy of at least
$$
\text{ExpectedPoA}(\mathcal G) =O\left(\log \left(\E_{S\sim\mathbf{p}}[|S|] \right)\right) =O(\log \tilde{n}) \, .
$$

\paragraph{Protocol.} 
The protocol is similar to the one we defined with the guaranteed \diss. However rather than using the two guaranteed agents as the \diss\ we choose an appropriate set of \diss\ using the distributional information we have. In order to guarantee stability for any number of \diss\ we adjust the cost sharing protocol by adding two rest points for \diss\ where they pay $0$ share; we further slightly modify the cost shares of \diss\ to include cases where many \diss\ use the same machine. 

 Given the set of arriving agents $S$ and a strategy profile $\prof$ we next define the cost shares of the agents. Let $D\subseteq S$ be the set of \diss\ in $S$ and $R=S\setminus D$ be the set of regular agents in $S$. For simplicity, we drop the dependency on the load and on $\prof$ since there is no ambiguity. The cost share of any \dis\ $i \in \disSet$ using machine $j$ is
\[
	\xi_i (\prof) =
	\begin{cases}
	0 & \text{if } j \in \{1,2\} , w_j \in \{0, \cp_j(\lastSeg_j)-1\} \mbox{ and } \disSet \cap S_j = \{i\}\\
	\ve_{j}(\lastSeg_j) & \text{if } w_j \neq 1, \disSet \cap S_j \supset \{i\}, R \cap S_j \neq \emptyset \mbox{ and } i \in \{\high_1(S_j\cap D), \high_2(S_j\cap D)\}\\
	\ve_{j}(\lastSeg_j-1) & \text{if } w_j = 1, \disSet \cap S_j = \{i\} \mbox{ and } \lastSeg_j >1\\
	\CumCost_{j}(\lastSeg_j) & \text{otherwise.}
	\end{cases}\,
	\]
	
	The cost share of any regular agent $i \in R$ using machine $j$ is (the same as in Section~\ref{sec:genGuaranteedDisruptors})
		
	\[
	\xi_i (\prof) =
	\begin{cases}
	c_j & \text{if } w_j=0, \disSet \cap S_j =\emptyset \mbox{ and } i = \high_1(S_j\cap R) \\
	\CumCost_{j}(\lastSeg_j) & \text{if } w_j=0, \disSet \cap S_j \neq \emptyset \mbox{ and } i = \high_1(S_j\cap R)\\
	\CumCost_{j}(\lastSeg_j) & \text{if } w_j=1 \mbox{ and } i \in \{\high_1(S_j\cap R), \high_2(S_j\cap R)\} \\
	\CumCost_{j}(\lastSeg_j) & \text{if } w_j\notin \{0,1\}  \mbox{ and } i = \high_1(S_j\cap R) \\
	0 & \text{otherwise.}
	\end{cases}\,
	\]

We refer the reader to the appendix for the proof of the following theorem.

\begin{theorem} \label{thm:stablestoc}
The protocol for the class of 4-step cost functions is stable for any number of \diss .
\end{theorem}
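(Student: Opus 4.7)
The plan is to construct an explicit pure Nash equilibrium $\prof$ that works for any arrival realization $S$ and any subset $D\subseteq S$ of enforcers, following the template of Theorem~\ref{lem:genDisGuaranteedStability} but with a scalable enforcer-placement scheme. First I would run \textsc{Delayed-OPT} on the $|S|$ jobs to fix the target load on each machine, and place regular agents according to this allocation with the lowest-priority regulars assigned to the last used machine $r$; this ensures that any higher-priority regular deviating into $r$ would become the new top regular there and be charged the strictly larger $\CumCost_r(\lastSeg_r)$ rather than~$0$.

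For the $k=|D|$ enforcers I would exploit the three cheap regimes available in the modified protocol. A pair of enforcers on the last used machine $r$ forms its top-2 pair and each pays $\ve_r(\lastSeg_r)$ (case~2 of the enforcer protocol, applicable whenever $r$ contains at least one regular and $w_r\neq 1$). A solo enforcer on machine $1$ or $2$ at one of the two allowed excess values sits at a rest point and pays $0$ (case~1). Additional enforcer pairs can be piggy-backed on any other \textsc{Delayed-OPT}-used machine that contains a regular, contributing at most two $\ve$-paying top-2 enforcers per such machine. I would therefore place up to two enforcers on $r$, fill the two rest points on machines $1$ and $2$ with up to two solo enforcers, and distribute any remaining enforcers as $\ve$-paying pairs on other occupied machines; when $w_r=1$ I would instead invoke case~3 and place a single enforcer on the overloaded last machine.

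The deviation analysis then mirrors the two-enforcer proof. Regular agents have no profitable deviation: moving to an earlier machine either keeps their share at $0$ or triggers $\CumCost$ of a strictly smaller $\phi$-segment, while moving to a later or unused machine makes them the new top regular paying $\CumCost$ of a strictly larger segment. Rest-point enforcers paying $0$ cannot gain, since every alternative share is at least some $\ve>0$ or a $\CumCost$. A paired enforcer paying $\ve_j(\lastSeg_j)$ on machine $j$ cannot improve by moving to a machine with a smaller $\phi$-segment (where any $\ve$-share would be strictly larger by the monotonicity of $\ve$ along $\phi$) nor to a later or unused machine (where they become a solo enforcer without regulars and fall into case~4, paying $\CumCost$).

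The main obstacle is the case analysis required to handle the structural corner cases: (i)~when $|R|=0$, so that only the two solo rest points are cheap and the remaining enforcers must be placed along the \textsc{Delayed-OPT} allocation with $\CumCost$ shares, where stability still follows from the strict monotonicity of $\CumCost_j(\cdot)$ along $\phi$; (ii)~when $w_r\in\{0,1,\geq 2\}$, which dictates whether case~2 or case~3 governs the last machine and how the lowest-priority regulars must be placed to block migration incentives; and (iii)~when $k$ exceeds the number of $\ve$-slots across occupied machines, forcing a handful of enforcers into $\CumCost$ positions whose equilibrium property must be verified separately via the same monotonicity argument. The bulk of the work lies in a systematic split along these parameters, with each sub-case reducing to a routine comparison of $\ve$ and $\CumCost$ values along the $\phi$-order established in Section~\ref{sec:genGuaranteedDisruptors}.
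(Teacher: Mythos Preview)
Your high-level template is right, but the enforcer deviation analysis contains a concrete error that breaks the construction as stated. You claim that a paired enforcer on machine $j$ moving ``to a later or unused machine\dots\ becomes a solo enforcer without regulars and falls into case~4.'' This is false for any \emph{used} machine $j'$ with $\phi_{j'}(\lastSeg_{j'}(A))>\phi_j(\lastSeg_j(A))$ that you left without an enforcer pair: such a machine has regular agents and excess~$0$, so after the deviation the enforcer is the unique enforcer there with excess~$1$, which triggers case~3 and a share of $\ve_{j'}(\lastSeg_{j'}(A))$. By the monotonicity of $\ve$ along $\phi$ this is \emph{strictly smaller} than the $\ve_j(\lastSeg_j(A))$ the enforcer was paying, so the deviation is profitable. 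The paper avoids this by placing the pairs in priority order on machines with \emph{increasing} $\ve$ starting from $r$ (equivalently, decreasing $\phi$), so that every used machine later in the $\phi$-order than any paired enforcer already carries a pair; your ``distribute any remaining enforcers as $\ve$-paying pairs on other occupied machines'' does not impose this ordering, and your analysis does not supply it. Relatedly, you also need the pair on $r$ to consist of the two \emph{highest}-priority enforcers, so that any third enforcer landing on $r$ is outside the top two and pays $\CumCost$; you only say the pair ``forms its top-2 pair,'' which is not the same thing.

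A second gap concerns small numbers of enforcers. You never treat $d=0$ and $d=1$ explicitly (your corner case~(i) is $|R|=0$, not $|D|=0$), and in those regimes the paper does something you do not: when $w_r(A)=\cp_r(\lastSeg_r(A))-1$ it \emph{departs} from the \textsc{Delayed-OPT} allocation by moving one regular from the most expensive machine to $r$. Your plan always follows \textsc{Delayed-OPT}, and your regular-agent argument (``moving to a later machine makes them the new top regular paying $\CumCost$ of a strictly larger segment'') fails precisely here, since a deviator who fills $r$ faces $w_r=0$ with no enforcer and therefore pays $c_r$ rather than $\CumCost_r$. More generally, the paper's placement on machines other than $r$ enforces that the non-zero payers are exactly the highest-priority regulars, which is what guarantees a deviating non-zero payer lands in the top two of the target machine; your ``lowest-priority regulars on $r$'' condition alone does not ensure this across the remaining machines.
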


Next we continue with upper bounding the expected price of anarchy of our protocol. First we prove two important lemmas on how far the cost of any Nash equilibrium may be from the cost of the allocation $A$ of the \textsc{Delayed-OPT} algorithm, conditioned on the number of \diss\ in the system. We distinguish two cases of having at least three \diss\ or at most two \diss\ in the system. In the first case, the proof is similar to the one of Theorem~\ref{lem:genDisGuaranteedPoA}, but we now need at least three \diss\ because based on the protocol at most two \diss\ may pay $0$ cost shares; those \diss\ have no incentive to deviate to machines that are not used in $A$ and are full with regular agents.

\begin{lemma}\label{lem:highdist}
If $d\geq 3$ \diss\ arrive then the cost of the Nash equilibrium is no more than $d+3$ times the cost of the allocation $A$ of the \textsc{Delayed-OPT} algorithm, by ignoring the arbitrarily small charges of $\ve_{j}(k)$ values. 
\end{lemma}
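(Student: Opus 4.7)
The plan is to mirror the two-step structure of the proof of Theorem~\ref{lem:genDisGuaranteedPoA}: first show that any pure Nash equilibrium $\s$ of the stochastic protocol produces the same allocation as \textsc{Delayed-OPT}, i.e.\ $\ell_j(\s) = \ell_j(A)$ for every machine $j$, and then carefully count the ``heavy'' $\CumCost_j$-shares to bound $\hat{C}(\s)$.

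For the allocation step I would adapt Claim~\ref{cl:NEequalsA}. Assume for contradiction that some machine $r$ has $\ell_r(\s) > \ell_r(A)$ and maximizes $\phi_r(\lastSeg_r(\s))$ among such machines, and let $j$ be any machine with $\ell_j(\s) < \ell_j(A)$. If $r$ hosts a regular agent paying $\CumCost_r(\lastSeg_r(\s))$, which happens whenever $w_r(\s) \ne 0$ or $\disSet \cap S_r(\s) \ne \emptyset$, that regular deviates profitably to $j$ with new share at most $\CumCost_j(\lastSeg_j(A)) < \CumCost_r(\lastSeg_r(\s))$. The new subcase, absent in the guaranteed-disruptor setting, is when $r$ is filled by regulars only and $w_r(\s) = 0$. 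Here the hypothesis $d \geq 3$ is essential: since rule~1 admits at most two resting disruptors (one per rest point at machines $1$ and $2$), there exists at least one ``working'' disruptor currently charged at least $\ve_{j'}(\lastSeg_{j'}(\s))$ at some other machine $j'$. Because the $\ve$-values strictly decrease in the $\phi$-order and $\phi_r(\lastSeg_r(\s))$ is maximal among used segments, this disruptor's deviation to $r$ triggers rule~3 after deviation and lands them on the strictly smaller share $\ve_r(\lastSeg_r(\s))$, a contradiction.

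Once $\ell_j(\s) = \ell_j(A)$ for all $j$, I would bound $\hat{C}(\s)$. For every used $j$, $\CumCost_j(\lastSeg_j(\s)) = \CumCost_j(\lastSeg_j(A)) \leq C(A)$, since $\CumCost_j(\lastSeg_j(A))$ equals the cost of filling to capacity every segment of priority at most $\phi_j(\lastSeg_j(A))$, precisely what \textsc{Delayed-OPT} does on its way to producing $A$. Writing $\hat{C}(\s)$ as base cost plus overcharging, the base equals $C(\s) = C(A)$, and I would attribute the overcharging as one $\CumCost_j$-share per disruptor (either a rule-4 payment by the disruptor itself or the $\CumCost$-payment forced on the $\high_1$ regular it coexists with, rather than the $c_j$ it would otherwise pay) plus a residual overcharge totaling at most $2 C(A)$ arising from the at-most-one $w_j(A) = 1$ machine of $A$ and the attached $\high_2$ share. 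Summing gives $\hat{C}(\s) \leq C(A) + d \cdot C(A) + 2 C(A) = (d+3) C(A) + O(\ve)$.

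The main obstacle will be the novel subcase of the allocation argument, where $r$ contains only regulars with $w_r = 0$: the existence of a ``working'' disruptor and the strict profitability of its deviation to $r$ must be verified carefully, using both $d \geq 3$ to guarantee such a disruptor beyond the two rest points and the strict $\phi$-monotonicity of the $\ve$-values to ensure the deviation is cheaper. A secondary subtlety is ensuring that the one-$\CumCost$-share-per-disruptor attribution correctly handles machines where a rule-4 disruptor coexists with a $\high_1$ regular forced into its own $\CumCost$-share; this may require a charging argument that pairs each disruptor's overcharge with the corresponding regular's so as not to double-count, exploiting that in any Nash equilibrium a disruptor has only limited ability to create such coexistences before finding a cheaper deviation.
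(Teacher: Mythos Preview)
Your proposal is correct and follows essentially the same two-step structure as the paper's proof: first adapt Claim~\ref{cl:NEequalsA} to establish $\ell_j(\s)=\ell_j(A)$ (using $d\geq 3$ to guarantee a disruptor outside the two rest points), then bound the overcharging by at most $d$ heavy $\CumCost$-shares attributable to disruptors plus a residual of $2\,C(A)$ coming from the last machine, giving $(d+3)\,C(A)$. The paper phrases the second step as a per-machine accounting (bounding the overcharge on each non-last machine by $d_j\,C(A)$ and on the last machine by $(d_r+2)\,C(A)$), whereas you phrase it as a per-disruptor attribution, but the arithmetic is the same and both arrive at the identical bound.
\medskip

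One small remark on your exposition: the ``regulars only, $w_r(\s)=0$'' subcase is not actually new relative to Claim~\ref{cl:NEequalsA}; what is new in the stochastic protocol is only that a disruptor may sit at a rest point paying $0$, so you need $d\geq 3$ (rather than $d\geq 1$) to guarantee a disruptor with strictly positive share who can profitably move to $r$. Your identification of this as the place where the hypothesis $d\geq 3$ enters is exactly right, and your computation that the deviation lands on $\ve_r(\lastSeg_r(\s))$ via rule~3 (since the load increment pushes $\lastSeg_r$ up by one and the deviator is alone) is correct.
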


\begin{proof}
Since we have two rest points for the \diss\ (first case of the cost shares), meaning that at most two \diss\ may pay $0$ in any Nash equilibrium $\prof$, if $d\geq 3$ \diss\ appear in the system, then at least one of them must pay a non-zero share. Following the proof of Theorem~\ref{lem:genDisGuaranteedPoA} we can easily infer that $\prof$ uses the exact same allocation as the \textsc{Delayed-OPT} algorithm (Claim~\ref{cl:NEequalsA}). 
Next we need to bound the overcharging cost. Note that any cost share $\CumCost_{j}(\lastSeg_j(A))$ for some $j$ is no more than the cost of $A$. As a result we simply need to bound the number of agents charged with such a cost share. Let $d_j$ be the number of \diss\ in machine $j$.  

First, we examine the machines other than the last machine used by the \textsc{Delayed-OPT} algorithm. By definition, such machine $j$ will have zero excess, $w_j(A)=0$.   Therefore, if there are only regular agents there will be no overcharging. If there are only \diss\ the overcharging is $d_j  \CumCost_{j}(\lastSeg_j(A))$ which is at most $d_j$ times the total cost of $A$.  If there is at least one \dis\ and at least one regular agent, then we have at most one regular agent paying $\CumCost_{j}(\lastSeg_j(A))$ and either one \dis\ is paying $0$ or two \diss\ are paying the arbitrarily small value $\ve_{j}(\lastSeg_j(A))$. In any case, the overcharging is at most $d_j$ times the total cost of $A$, by ignoring the $\ve_{j}(\lastSeg_j(A))$ values.

Second, let's consider the last machine $r$ used by the \textsc{Delayed-OPT} algorithm. At most two regular agents are charged with $\CumCost_{r}(\lastSeg_r(A))$  (if $w_r(A)=1$). It is also possible that all the \diss\ are charged with $\CumCost_{r}(\lastSeg_r(A))$ and therefore, the total overcharging in machine $r$ is at most $d_r+2$ times the total cost of $A$, by ignoring again the $\ve_{r}(\lastSeg_r(A))$ values. 

Overall, the overcharging is at most  $d+2$ times the total cost of $A$ and as a result, the total cost of $\prof$ is no more than $d+3$ times the total cost of $A$.
\end{proof}

\begin{lemma}\label{lem:lowdist}
If  $r$ regular agents  and $d\leq 2$ \diss\ arrive then the cost of the Nash equilibrium $\prof$ is no more than $r+d$ times the cost of the allocation $A$ of the \textsc{Delayed-OPT} algorithm, by ignoring the arbitrarily small charges of $\ve_{j}(k)$ values. 
\end{lemma}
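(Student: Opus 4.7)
The plan is to bound the cost share of each individual agent in the Nash equilibrium $\prof$ by $C(A)$ (up to an arbitrarily small additive term from the $\ve_j(\cdot)$ values) and then sum over the $r+d$ participating agents. By inspecting the protocol, the only nontrivial values that can appear as cost shares, apart from $0$ and the $\ve_j(\cdot)$ terms, are $c_j = \tilde c_j(\lastSeg_j(\prof))$ (for a regular agent) and $\CumCost_j(\lastSeg_j(\prof))$. Since by construction $\CumCost_j(k) = \sum_{j'} \max_{k' : \phi_{j'}(k') \leq \phi_j(k)} \tilde c_{j'}(k') \geq \tilde c_j(k) = c_j$, it suffices to bound the $\CumCost_j(\lastSeg_j(\prof))$-share of every agent.

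I would first establish the key monotonicity fact: letting $K$ be the largest value of $\phi$ appearing among segments used in the \textsc{Delayed-OPT} allocation $A$, whenever $\phi_j(k) \leq K$ we have $\CumCost_j(k) \leq C(A)$. This holds because Lemma~\ref{lem:fullSegmentFirst} implies that \textsc{Delayed-OPT} fills segments strictly in $\phi$-order, so for each machine $j'$ the index $k'_{j'} := \max\{k' : \phi_{j'}(k') \leq \phi_j(k)\}$ satisfies $k'_{j'} \leq \lastSeg_{j'}(A)$, and monotonicity of $\tilde c_{j'}$ gives $\tilde c_{j'}(k'_{j'}) \leq \tilde c_{j'}(\lastSeg_{j'}(A))$; summing over $j'$ yields $\CumCost_j(k) \leq C(A)$.

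Next I would use the equilibrium condition to bound each agent's share. Fix any agent $i \in S_j(\prof)$ in $\prof$. If $\phi_j(\lastSeg_j(\prof)) \leq K$, the monotonicity fact immediately gives $\xi_i(\prof) \leq \CumCost_j(\lastSeg_j(\prof)) \leq C(A)$. Otherwise $\lastSeg_j(\prof) > \lastSeg_j(A)$ (so $\ell_j(\prof) > \ell_j(A)$), and since both $\prof$ and $A$ schedule the same $r+d$ jobs, load conservation implies the existence of a machine $j'$ with $\ell_{j'}(\prof) < \ell_{j'}(A)$. Consider the deviation of $i$ to $j'$: the resulting load is $\ell_{j'}(\prof) + 1 \leq \ell_{j'}(A)$, so $\phi_{j'}(\lastSeg_{j'}(\prof_{-i}, j')) \leq K$. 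Inspecting the protocol, the share of $i$ on $j'$ after the deviation lies in $\{0,\, \ve_{j'}(\cdot),\, c_{j'},\, \CumCost_{j'}(\lastSeg_{j'}(\prof_{-i}, j'))\}$, which by the monotonicity fact is at most $C(A) + \ve$ for some arbitrarily small $\ve$. The Nash property then forces $\xi_i(\prof) \leq C(A) + \ve$.

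Summing over all $r+d$ agents yields $\hat C(\prof) \leq (r+d)(C(A) + \ve)$, and letting the $\ve_j(\cdot)$ values be arbitrarily small gives the claimed $(r+d)\cdot C(A)$ bound. I expect the delicate step to be the deviation argument: one must verify simultaneously that (i) an underloaded alternative $j'$ really exists whenever some agent sits on a segment beyond the $\phi$-range covered by $A$, and (ii) after deviation the agent's share on $j'$ lands in one of the bounded categories regardless of whether $i$ is a regular agent or an enforcer, and regardless of how many enforcers already occupy $j'$. Both are ultimately driven by the same fact, namely that \textsc{Delayed-OPT} fills segments in a strictly increasing $\phi$-order, which couples ``beyond $K$ in $\prof$'' to ``strictly below $A$'s load somewhere in $\prof$''.
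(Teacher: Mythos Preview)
Your proposal is correct and follows essentially the same route as the paper: both arguments bound every agent's cost share by $C(A)$ using the Nash condition against a deviation to a machine $j'$ with $\ell_{j'}(\prof)<\ell_{j'}(A)$, then sum over the $r+d$ agents. Your version is in fact more careful than the paper's, since you explicitly handle the ``easy'' case where an agent already sits on a segment with $\phi$-value at most $K$ (and hence needs no deviation), whereas the paper tacitly folds this into the same bound.
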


\begin{proof} 
If the allocation of $\prof$ is not the same as $A$, there must be some machine $j$ such  that $\ell_j(\prof)<\ell_j(A)$. This implies that any agent can deviate to this machine and pay at most $\CumCost_{j}(\lastSeg_j(A))$. As a result the cost of any agent (\dis\ or regular) under $\prof$ is no more than $\CumCost_{j}(\lastSeg_j(A))$ which is upper bounded by the total cost of $A$. Therefore, the total cost of $\prof$ is no more than $r+d$ times the total cost of $A$.
\end{proof}

Next we upper bound the price of anarchy in the special case where the agents arrival probabilities are identical, that is  $p_i=p$ for all $i \in \mathcal N$. Note that in this case the expected number of agents is $\tilde{n}=E_{S\sim\mathbf{p}}[|S|]=p |\mathcal N|$.

\begin{theorem}\label{thm:iidpoa}
For independently arriving agents with identical probabilities $p$ and for the class of 4-step cost functions, there exists a protocol using $|D|= 3+\lfloor 3\frac{\log({p |\mathcal N|})}{-\log (1-p)}\rfloor$ \diss\ that
 achieves an expected price of anarchy of 
$$
\text{ExpectedPoA}(\mathcal G) =O(\log (\E_{S\sim\mathbf{p}}[|S|] )) =O(\log \tilde{n}) \,.
$$
\end{theorem}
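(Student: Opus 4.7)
The plan is to condition on $D_{\mathrm{arr}} := |D \cap S|$, the number of disruptors that actually arrive, and split the expectation over the two regimes handled by Lemmas~\ref{lem:highdist} and \ref{lem:lowdist}. Observe that $D_{\mathrm{arr}} \sim \mathrm{Binomial}(|D|,p)$. Since $C(A) < 4\, C(\opt)$, the two lemmas give a per-realization bound of $\mathrm{PoA}(S) \leq 4(D_{\mathrm{arr}}+3)$ on the event $\{D_{\mathrm{arr}} \geq 3\}$ and $\mathrm{PoA}(S) \leq 4|S|$ on $\{D_{\mathrm{arr}} \leq 2\}$, so taking expectations yields
\begin{equation*}
\E_{S}[\mathrm{PoA}(S)] \;\leq\; 4\,\E[(D_{\mathrm{arr}}+3)\,\mathbf{1}\{D_{\mathrm{arr}} \geq 3\}] \;+\; 4\,\E[|S|\,\mathbf{1}\{D_{\mathrm{arr}} \leq 2\}].
\end{equation*}

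For the first term, the bound $(D_{\mathrm{arr}}+3)\mathbf{1}\{D_{\mathrm{arr}}\geq 3\} \leq D_{\mathrm{arr}}+3$ reduces matters to $\E[D_{\mathrm{arr}}]+3 = |D|p + 3$. The elementary inequality $p \leq -\log(1-p)$ for $p \in (0,1)$ combined with the defining formula $|D| = 3 + \lfloor 3\log\tilde{n}/(-\log(1-p))\rfloor$ gives $|D|p \leq 3p + 3\log\tilde{n}$, so this term is $O(\log\tilde{n})$.

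For the second term, decompose $|S| = D_{\mathrm{arr}} + R_{\mathrm{arr}}$ with $R_{\mathrm{arr}} := |(\mathcal{N}\setminus D) \cap S|$. Since agents arrive independently, $D_{\mathrm{arr}}$ and $R_{\mathrm{arr}}$ are independent and $\E[R_{\mathrm{arr}}] \leq p|\mathcal{N}| = \tilde{n}$, so
\begin{equation*}
\E[|S|\,\mathbf{1}\{D_{\mathrm{arr}} \leq 2\}] \;\leq\; (2 + \tilde{n})\,\Prob[D_{\mathrm{arr}} \leq 2].
\end{equation*}
A direct computation using $|D|-3 \geq 3\log\tilde{n}/(-\log(1-p)) - 1$ shows $(1-p)^{|D|-2} \leq \tilde{n}^{-3}$, and summing the first three binomial masses gives
\begin{equation*}
\Prob[D_{\mathrm{arr}} \leq 2] \;\leq\; \bigl(1 + |D|p + \tfrac{1}{2}(|D|p)^{2}\bigr)\,\tilde{n}^{-3} \;=\; O\!\left(\log^{2}\tilde{n}\,/\,\tilde{n}^{3}\right).
\end{equation*}
Consequently $\tilde{n}\,\Prob[D_{\mathrm{arr}} \leq 2] = O(\log^{2}\tilde{n}/\tilde{n}^{2}) = O(1)$, and combining with the first term gives $\E_{S}[\mathrm{PoA}] = O(\log\tilde{n})$.

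The main obstacle is the calibration of $|D|$. We have two competing pressures: $|D|p$ enters the good-case bound linearly and must stay $O(\log\tilde{n})$, while the bad-case bound involves $|\mathcal{N}| = \tilde{n}/p$, which can be arbitrarily larger than $\tilde{n}$, and so requires $\Prob[D_{\mathrm{arr}} \leq 2]$ to decay polynomially in $\tilde{n}$ uniformly in $p$. The denominator $-\log(1-p)$ in the chosen formula is precisely what makes the identity $(1-p)^{|D|} = \tilde{n}^{-\Theta(1)}$ hold uniformly across $p \in (0,1)$, reconciling both constraints; verifying this uniformity carefully (and checking that the polynomial $(|D|p)^{2}$ factor from the binomial tail is absorbed by the extra $\tilde{n}$ in the $\tilde{n}^{-3}$ decay) is the only nontrivial calculation in the argument.
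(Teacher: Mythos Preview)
Your proposal is correct and follows essentially the same approach as the paper: split the expectation according to whether at least three \diss\ arrive, apply Lemmas~\ref{lem:highdist} and~\ref{lem:lowdist} in the respective regimes, bound the good-case term via $\E[D_{\mathrm{arr}}]=p|D|=O(\log\tilde n)$ using $p\le -\log(1-p)$, and bound the bad-case term by showing $(1-p)^{|D|-2}\le\tilde n^{-3}$ so that the factor of $\tilde n$ coming from $\E[|S|]$ is absorbed. The paper packages the bad-case computation as a separate Lemma~\ref{lem:1} and explicitly treats the regime $\tilde n\le 1$ (which you leave implicit), but the underlying decomposition and calculations are the same.
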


Before proceeding with the proof of Theorem~\ref{thm:iidpoa}, we state the following lemma. The proof of the lemma is in the appendix.

\begin{lemma}\label{lem:1}
If we choose a set of \diss\ $D$ such that  $|D|=3+\lfloor 3\frac{\log({p  |\mathcal N|})}{-\log (1-p)} \rfloor$ then 
$$ \Prob[d \leq 2]\E_{S\sim\mathbf{p}}[|S|\mid d\leq 2]\leq 9\, ,$$

where $\Prob$ is the probability symbol and $d=|S\cap D|$ is a random variable depending on $\mathbf{p}.$ 
\end{lemma}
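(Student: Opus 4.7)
The plan is to decouple the tail event $\{d\leq 2\}$ from the expected system size by exploiting that the Bernoulli arrivals indexed by $D$ are independent from those indexed by $\mathcal{N}\setminus D$. Write $T=p|\mathcal{N}|$ for the expected number of agents. Since $|S\setminus D|$ is independent of $d$, and since $d\leq 2$ whenever we condition on $\{d\leq 2\}$,
\[
\E_{S\sim\mathbf{p}}[\,|S|\mid d\leq 2\,]\;=\;\E[d\mid d\leq 2]+\E[|S\setminus D|]\;\leq\;2+p(|\mathcal{N}|-|D|)\;\leq\;2+T.
\]
Thus it suffices to establish $\Prob[d\leq 2]\,(2+T)\leq 9$. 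When $T\leq 7$, this is immediate from $\Prob[d\leq 2]\leq 1$; the boundary case $T<1$ (where the formula for $|D|$ would give a nonpositive value, so no disruptors need be used) is likewise trivial since then $2+T<3$.

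For $T>7$, the value $|D|=3+\lfloor 3\log T/(-\log(1-p))\rfloor$ is calibrated precisely so that $(1-p)^{|D|-2}\leq T^{-3}$. Indeed, $\lfloor x\rfloor\geq x-1$ gives $|D|-2\geq 3\log T/(-\log(1-p))$, and since $(1-p)^{(\cdot)}$ is decreasing,
\[
(1-p)^{|D|-2}\;\leq\; (1-p)^{3\log T/(-\log(1-p))}\;=\;\exp(-3\log T)\;=\;T^{-3}.
\]
Factoring this out of the binomial expansion, and using $\binom{|D|}{k}p^k\leq (|D|p)^k/k!$ together with $(1-p)^{2-k}\leq 1$,
\[
\Prob[d\leq 2]\;=\;(1-p)^{|D|-2}\sum_{k=0}^{2}\binom{|D|}{k}p^k(1-p)^{2-k}\;\leq\; T^{-3}\Bigl(1+|D|p+\tfrac{1}{2}(|D|p)^2\Bigr).
\]

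To finish, I would control $|D|p$ using the standard inequality $-\log(1-p)\geq p$ for $p\in(0,1)$, which yields $p/(-\log(1-p))\leq 1$ and hence $|D|p\leq 3p+3\log T\leq 3(1+\log T)$. Substituting gives $\Prob[d\leq 2](2+T)\leq (2+T)T^{-3}\cdot O(\log^2 T)$, a quantity that tends to $0$ as $T\to\infty$; what remains is to check that the explicit polynomial-in-$\log T$ numerator, divided by $T^2$, stays comfortably below $9$ on $[7,\infty)$. This is a routine monotonicity check: the expression is decreasing for $T\geq 7$, and direct evaluation at $T=7$ already yields a value well under $2$. The main obstacle is simply the bookkeeping of constants; the three structural ingredients---independence of $|S\setminus D|$ and $d$, the $(1-p)^{|D|-2}\leq T^{-3}$ calibration, and $-\log(1-p)\geq p$---each play a clean role in the argument.
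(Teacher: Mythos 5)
Your proposal is correct and follows essentially the same route as the paper's proof: a trivial case for small expected size, the decomposition $\E[|S|\mid d\le 2]\le 2+p|\mathcal N\setminus D|$, the binomial tail bound with the factor $(1-p)^{|D|-2}$ pulled out, and the calibration $(1-p)^{|D|-2}\le (p|\mathcal N|)^{-3}$ via $1+\lfloor x\rfloor\ge x$. The only divergence is in the final bookkeeping: the paper bounds $1+p|D|+p^2|D|^2\le 3(p|\mathcal N|)^2$ directly from $|D|\le|\mathcal N|$, which cancels exactly against $(p|\mathcal N|)^{-3}\cdot 3p|\mathcal N|$ to give $9$ on the nose, whereas you bound $|D|p\le 3(1+\log T)$ via $-\log(1-p)\ge p$ and then need the (correct, but slightly more laborious) monotonicity-plus-evaluation check of $(2+T)T^{-3}\bigl(1+3(1+\log T)+\tfrac{9}{2}(1+\log T)^2\bigr)$ on $[7,\infty)$.
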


\begin{proof} (Theorem~\ref{thm:iidpoa})
Let $S$ be a random set of arriving agents, $G$ be the corresponding game, $Eq(G)$ be the set of Nash equilibria for $G$ and $A$ be the allocation of the \textsc{Delayed-OPT} algorithm for the set $S$. Moreover, let $d=|S\cap D|$ be a random variable depending on $\mathbf{p}$.
Combining both Lemmas~\ref{lem:highdist} and~\ref{lem:lowdist}  we get that the expected ratio of the cost of the worst case equilibrium to the cost of the allocation of the \textsc{Delayed-OPT} algorithm is 

\begin{equation}
\E_{S\sim\mathbf{p}}\left[ \frac{\max_{\prof\in\textrm{Eq($G$)}}\hat{C}(\prof)}{C(A)}\right] = \Prob[d \leq 2] \E_{S\sim\mathbf{p}}[|S| \mid d\leq 2] +  \Prob[ d\geq 3] \E_{S\sim\mathbf{p}}[d+3\mid d\geq 3] \, .
\label{eq:0}
\end{equation}

We can bound the second summand of Equation~\eqref{eq:0} as

\begin{eqnarray}
\Prob[d\geq 3] \E_{S\sim\mathbf{p}}[d+3\mid d\geq 3]  &\leq& \E_{S\sim\mathbf{p}}[d+3\mid d\geq 3]  \notag\\
&\leq&  \E_{S\sim\mathbf{p}}[d\mid d \geq 0] + 6 =  p |D| +6\, .
\label{eq:1}
\end{eqnarray}

Combining Lemma~\ref{lem:1}, Equation~\eqref{eq:0} and Equation~\eqref{eq:1} we get that

\begin{eqnarray}
\E_{S\sim\mathbf{p}}\left[ \frac{\max_{\prof\in\textrm{E($G$)}}\hat{C}(\prof)}{C(A)}\right] &\leq& p  |D| +15    \leq  \lfloor 3\log (p  |\mathcal N|)  \left(\frac{p}{-\log (1-p)}\right) \rfloor +18 \notag \\
&\leq& 3\log (p  |\mathcal N|)  \left(\frac{p}{-\log (1-p)}\right) +18
\notag \\
&\leq& 3\log (p  |\mathcal N|) + 18 = 3\log \tilde{n} + 18\, , 
\end{eqnarray}
where the last inequality is due to $(\frac{p}{-\log (1-p)})\leq 1$ for $p\geq 0$. The fact that the cost of the \textsc{Delayed-OPT} algorithm outcome is a constant approximation to the optimum cost completes the proof.
\end{proof}

Next we upper bound the price of anarchy when the agents arrival probabilities are not necessarily identical. For $p_i$ being the probability that agent $i$ arrives, $\tilde{n}=E_{S\sim\mathbf{p}}[|S|]=\sum_{i=1}^{|\mathcal{N}|} p_i$ is the expected number of agents. The  proof of the theorem is in the appendix.

\begin{theorem}\label{thm:indpoa}
For independently arriving agents with not necessarily identical probabilities and for the class of 4-step cost functions, there exists a protocol that
 achieves an expected price of anarchy of 
$$
\text{ExpectedPoA}(\mathcal G) =O(\log (\E_{S\sim\mathbf{p}}[|S|] )) =O(\log \tilde{n}) \,.
$$
\end{theorem}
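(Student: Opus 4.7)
The plan is to adapt the proof of Theorem~\ref{thm:iidpoa} by choosing the enforcer set $D$ based on the individual probabilities $p_i$, rather than using a fixed size determined by a single $p$. Let $\tilde{n}=\E_{S\sim\mathbf{p}}[|S|]=\sum_i p_i$. I would sort the agents in decreasing order of $p_i$ and define $D$ as the smallest prefix in that order satisfying $\mu_D:=\sum_{i\in D}p_i\ge 8\ln \tilde{n}$. If no such prefix exists (i.e.\ $\sum_i p_i<8\ln \tilde{n}$), then $\tilde{n}=O(1)$, so the worst-case PoA for any profile is trivially at most $|S|$, which has constant expectation, and we are done. Otherwise, since each $p_i\leq 1$, adding one extra agent changes $\mu_D$ by at most $1$, giving $\mu_D=\Theta(\log \tilde{n})$. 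Stability of the protocol for this choice of $|D|$ is already provided by Theorem~\ref{thm:stablestoc}, which holds for any number of enforcers.

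Let $d=|S\cap D|$. Combining Lemmas~\ref{lem:highdist} and~\ref{lem:lowdist} as in Equation~\eqref{eq:0} and using that the \textsc{Delayed-OPT} outcome is a $4$-approximation to the optimum, we obtain
\begin{equation*}
\E_{S\sim\mathbf{p}}\!\left[\frac{\max_{\prof\in Eq(G)}\hat{C}(\prof)}{C(\opt)}\right]\leq 4\Bigl(\Prob[d\leq 2]\,\E[|S|\mid d\leq 2]+\Prob[d\geq 3]\,\E[d+3\mid d\geq 3]\Bigr),
\end{equation*}
up to the arbitrarily small $\ve$-charges. The second summand is immediately controlled by $\Prob[d\geq 3]\E[d+3\mid d\geq 3]\leq \E[d]+3=\mu_D+3=O(\log\tilde{n})$. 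The first summand equals $\E[|S|\mathbb{1}[d\leq 2]]$. Splitting $|S|=|S\cap D|+|S\setminus D|$, observing that $|S\cap D|\leq 2$ on the event $\{d\leq 2\}$, and using that $|S\setminus D|$ is independent of $d$ (by independence of the $X_i$'s), I get
\begin{equation*}
\E[|S|\mathbb{1}[d\leq 2]]\leq 2\Prob[d\leq 2]+(\tilde{n}-\mu_D)\Prob[d\leq 2]\leq \tilde{n}\,\Prob[d\leq 2].
\end{equation*}

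The key quantitative step is a multiplicative Chernoff bound applied to the sum of independent Bernoullis $d$: writing $2=(1-\delta)\mu_D$ with $\delta=1-2/\mu_D\geq 3/4$ (valid since $\mu_D\geq 8$), Chernoff gives $\Prob[d\leq 2]\leq \exp(-\delta^2\mu_D/2)\leq e^{-\mu_D/4}\leq \tilde{n}^{-2}$, so $\tilde{n}\,\Prob[d\leq 2]\leq 1/\tilde{n}=O(1)$. Adding the two contributions yields $\text{ExpectedPoA}(\mathcal{G})=O(\log\tilde{n})$. The main obstacle is choosing $\mu_D$ to simultaneously control both terms: too small a threshold breaks the Chernoff estimate on $\Prob[d\leq 2]$, while too large a threshold inflates the $\mu_D+3$ contribution from the high-enforcer regime. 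Greedy prefix selection on the highest-probability agents, landing $\mu_D$ in a $\Theta(\log\tilde{n})$ window, is precisely what achieves this trade-off, and it is a strict generalization of the constant-$p$ choice of $|D|$ used in Theorem~\ref{thm:iidpoa}.
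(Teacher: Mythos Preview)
Your argument is correct and follows essentially the same route as the paper's proof: both sort the agents by decreasing $p_i$, take the shortest prefix $D$ whose total probability mass is $\Theta(\log\tilde{n})$, invoke Theorem~\ref{thm:stablestoc} for stability, and then combine Lemmas~\ref{lem:highdist} and~\ref{lem:lowdist} through the decomposition of Equation~\eqref{eq:0} to bound the two regimes $d\le 2$ and $d\ge 3$. The only substantive difference is in how $\Prob[d\le 2]$ is controlled: you invoke a multiplicative Chernoff bound for the lower tail of $d=\sum_{i\in D}X_i$, whereas the paper expands $\Prob[d=0]+\Prob[d=1]+\Prob[d=2]$ directly, factors out $\prod_{i\ge 3}(1-p_i)$, and bounds that product via AM/GM together with $(1-x/k)^k\le e^{-x}$. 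The two estimates are essentially equivalent and both yield $\Prob[d\le 2]=O(1/\tilde{n})$; your Chernoff version is a bit more streamlined, while the paper's explicit computation makes the constants visible. One small boundary issue: your Chernoff step uses $\mu_D\ge 8$, which need not hold when $\tilde{n}$ is only slightly larger than $1$ (there a prefix with $\mu_D\ge 8\ln\tilde{n}$ may exist yet have $\mu_D<8$); but in that range $\tilde{n}=O(1)$ anyway and the trivial bound $\text{PoA}\le O(|S|)$ already suffices, so this is harmless.
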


\subsection*{Acknowledgments}
The authors would like to thank Giorgos Christodoulou for helpful discussions during the initial stages of this project. The work of the first author was partially supported by NSF CAREER award CCF-2047907.	

\bibliographystyle{abbrvnat}
\bibliography{cost-sharing}

\appendix

\section{Additional Discussion Regarding our Results}
\label{sec:appendixdisc}

In this section we briefly address some more technical aspects regarding our results and its comparison to prior work. We first discuss the type of overcharging that we use in our mechanisms, and compare it to the types of overcharging that has been used in prior work. We then provide some (partial) justification and intuition behind some limits that we require on the number of \diss, and the capacities of the capacitated constant cost functions.

\paragraph{Power of overcharging.}
At the core of our cost-sharing mechanism lies the ability of \diss\ to penalize other agents through overcharging. In fact, the total amount of overcharging that our mechanism enforces can depend not only on the number of agents using a machine, but also on the actual set of agents (e.g., on whether one of them is an \dis\ or not). This is in contrast to the way that some prior cost-sharing mechanisms have used overcharging, e.g., in \cite{CGS17,CGLS20}. In fact, the result of \citet{CGS17}, showing that no resource-aware mechanism can achieve a PoA better than $O(\sqrt{n})$, even if it uses overcharging, assumes that the amount of overcharging would depend only on the number of users, not the set of users of the corresponding machine. In light of this observation, one could argue that our mechanisms in this paper do not only leverage the additional information that they have, relative to prior-free resource-aware mechanisms; they actually also leverage the ability to introduce overcharging in a more flexible way.

\paragraph{Restriction on the number of \diss.}
Assume that there was only one \dis\ appearing in the system. Consider a number of agents such that in the allocation of the \textsc{Delayed-OPT} algorithm the last machine $r$ is fully used, i.e., $n_r=\cp_r$ agents are allocated in machine $r$. The \dis\ should not have an incentive to deviate to machines full of regular agents and therefore, he should pay at most $\ve_{r-1}$. Moreover, he should pay more than $\ve_{r+1}$ so he deviates to machine $r+1$ if it is full of regular agents. This is the reason we assume the monotonicity on the $\ve_j$ values and we allocate the \dis\ in machine $r$. 

Consider now a number of agents such that in the allocation of the \textsc{Delayed-OPT} algorithm machine $r$ is 
used by $n_r=\cp_r-1$ agents. In the stable outcome, if we allocate the \dis\ to any full machine $j$, he has an incentive to deviate to the last machine, make it full and pay $\ve_r < \ve_{j}$. Therefore, the only option is to allocate the \dis\ to the last machine $r$. In order for such allocation to be an equilibrium, he should be charged with some $\ve'_r <\ve_{r-1}$ so he has no incentive to deviate to prior machines full of regular agents. By considering now $n_r=\cp_r-2$ and using the same arguments we can show that if the \dis\ uses machine $r$, he should be charged with something strictly less than $\ve_{r-1}$. Similarly, for any $n_r$ we can show that the \dis\ should be charged with an amount strictly less than $\ve_{r-1}$, and therefore the same should hold for $n_r=1$. But for $n_r=1$ the \dis\ would be alone in $r$ and hence, he should pay at least $c_r$  which leads to a contradiction. Considering \diss\ in pairs resolves this issue and results in the existence of stable outcomes.

\paragraph{Restriction on the capacities.}
This was derived from the need to distinguish between the two cases in the proof of Lemma~\ref{lem:disGuaranteedStability}. First notice that the two \diss\ should be together, otherwise they have an incentive to deviate to a machine full of regular agents. Suppose now that $n_r=2$ agents use the last machine $r$ based on the allocation of the \textsc{Delayed-OPT} algorithm. If both \diss\ were placed in machine $r$, one of them should be charged with at least $c_r/2$ and still have an incentive to deviate to a full machine. Therefore, for $n_r\leq 2$, in any equilibrium the \diss\ do not use $r$.

One the other hand, in the case that $n_r=\cp_r-1$, if no \dis\ was placed in machine $r$, the lowest priority agent $i$ in machine $j$ that the \diss\ use, may have an incentive to deviate to machine $r$. This could be the case because agent $i$ is currently charged with $\CumCost_j$, which could possibly be greater than $c_r$, where $c_r$ would be the maximum charge of agent $i$ if he deviated to machine $r$; the reason is that $r$ would become full. This means that for $n_r=\cp_r-1$, in any equilibrium at least one \dis\ should use machine $r$ (this automatically means that $\cp_r$ should be different than $1$). 

Overall, the above two points indicate the restriction of $\cp_r-1>2$, which results in the requirement of $\cp_j\geq 4$ for all machines $j$.

\section{Missing Proofs from Section~\ref{sec:guar}}
\subsection{Proof of Theorem~\ref{lem:genDisGuaranteedStability}}

We next show how we construct a strategy profile that is a Nash equilibrium.
Let $A$ be the allocation 
 according to the \textsc{Delayed-OPT} algorithm and let $r$ be the last machine used by the \textsc{Delayed-OPT} algorithm. We will consider cases based on the excess of that machine, $w_{r}(A)$, and its load, 
$\ell_{r}(A)$.  For any other machine $j\neq r$ note that $j$ has $0$ excess, i.e. $w_{j}(A)=0$. 

 For the construction of the Nash equilibrium we allocate the agents the same way as the \textsc{Delayed-OPT} algorithm does, but we need to carefully assign the \diss\ and regular agents to the appropriate machines in order to ensure stability. The following assignment guarantees stability.

\begin{itemize}
\item If only one machine is used by the \textsc{Delayed-OPT} algorithm, then simply allocating any agent to this machine would be Nash equilibrium. 

\item If more than one machine is used by the \textsc{Delayed-OPT} algorithm, we distinguish between two cases based on the values of $w_{r}(A)$ and $\ell_{r}(A)$.

\begin{itemize}
\item $w_{r}(A)\neq 1$ and $\ell_{r}(A)>2$: We allocate the two \diss\ to machine $r$ and the regular agents in a way such that the allocation coincides with $A$ by ensuring that agents paying non-zero cost shares are the highest priority agents, with the lowest among them to be allocated to machine $r$.

  Obviously, regular agents paying $0$ have no incentive to deviate. If an \dis\ from machine $r$ deviated to another machine $j$ then the excess would increase to $1$ and the \dis\ would pay $\ve_{j}(\lastSeg_j(A))$, but ,by the definition of the $\ve_j(k)$ values, this is higher than his current payment of $\ve_{r}(\lastSeg_r(A))$. Finally, regarding regular agents with non-zero cost shares, if they deviated to machine $r$ they would have the highest priority and pay at least $\CumCost_{r}(\lastSeg_r(A))$ which is more than their current charge. If they deviated to any machine $j\neq r$, they would increase the excess to $1$ and pay $\CumCost_{j}((\lastSeg_j(A)+1))> \CumCost_{r}(\lastSeg_r(A))$. 
  
\item $w_{r}(A) = 1$ or $\ell_{r}(A)=2$: We allocate the two \diss\ to machine $r'\neq r$ that is the last machine used by the \textsc{Delayed-OPT} algorithm before $r$.\footnote{We do not assign the \diss\ to machine $r$, because in the cases that the excess is $1$ and there are two \diss\ or there is no regular agents, the \diss\ pay a high cost.} 
We allocate the regular agents in a way such that the allocation coincides with $A$ by ensuring that agents paying non-zero cost shares are the highest priority agents with the two lowest among them (or the one lowest, if $\ell_{r}(A) = 1$) to be allocated to machine $r$.

  Similarly, regular agents paying $0$ have no incentive to deviate. If an \dis\ from machine $r'$ deviated to another machine $j\neq r$ the same argument as before holds. If he deviated to machine $r$, he would be the only \dis\ and the excess of $r$ would become either $2$ or $3$, meaning that his charge would be $\CumCost_{r}(\lastSeg_r(A))$. Finally, regarding regular agents with non-zero cost shares, the same arguments as before hold. 
\end{itemize}

\end{itemize}

\section{Missing Proofs from Section~\ref{sec:stoc}}
\subsection{Proof of Theorem~\ref{thm:stablestoc}} 
We  show how we construct a strategy profile that is a Nash equilibrium similarly to Theorem~\ref{lem:genDisGuaranteedStability}. Let $A$ be the allocation  
 according to the \textsc{Delayed-OPT} algorithm and let $r$ be the last machine used by the \textsc{Delayed-OPT} algorithm. We will consider cases based on the excess of that machine, $w_{r}(A)$ and its load, 
$\ell_{r}(A)$.  

If only one machine is used in the \textsc{Delayed-OPT} algorithm then simply allocating any agent to this machine would be Nash equilibrium. Next we consider separately the cases of no \dis , one \dis\ and at least two \diss .

\begin{itemize}
\item {\bf Only regular agents}
\begin{itemize}
\item If  $w_r(A)\neq \cp_r(\lastSeg_r(A))-1$, we assign agents according to $A$ making sure that the highest priority agents are responsible for non-zero costs, i.e. if an agent has a cost share of $0$ then all lower priority agents have a cost share of $0$. Additionally we ensure that among the agents paying non-zero cost, the lowest priority agents are allocated to machine $r$; in other words, if $w_r(A)=1$ we ensure that the two highest priority agents in machine $r$ have the lowest priority among agents with non-zero cost shares and if $w_r(A)\neq 1$ we ensure that the one highest priority agent in machine $r$ has the lowest priority among agents with non-zero cost shares.
 
Every  agent  with $0$ cost share has no incentive to deviate. Consider some agent with non-zero cost share. Deviating to some machine $j\neq r$ would increase the excess to $1$ causing the two highest priority agents to pay a non-zero cost share equal to $\CumCost_{r}(\lastSeg_r(A)+1)>\CumCost_{r}(\lastSeg_r(A))$. Currently there is only one agent charged with non-zero cost share in machine $j$, and therefore the deviating agent would have one of the two highest priorities which means that the deviation is not profitable. If the agent deviated to $r$, since $w_r(A)\neq \cp_r(\lastSeg_r(A))-1$  and the agent has higher priority than any agent in machine $r$, he would pay either $\CumCost_{r}((\lastSeg_r(A)+1))$ if $w_r(A)= 0$ or $\CumCost_{r}(\lastSeg_r(A))$  otherwise. In either case this is not a desirable deviation.

\item If $w_r= \cp_r(\lastSeg_r(A))-1$ then we may need to alter the allocation $A$  as follows. Let $j$ be the machine with the highest total cost $c_j(A)$ according to $A$. We move one agent from machine $j$ to machine $r$ (or do nothing if $j=r$). Similarly as above, we make sure that the highest priority agents are responsible for non-zero costs and the lowest priority of those are placed in machine $j$.  

We claim that this assignment is a Nash equilibrium. Naturally agents that have $0$ cost share have no incentive to deviate. If an agent with non-zero cost deviates to some machine $j'\neq j$  will result in excess of $1$ and will pay 
$\CumCost_{j'}(\lastSeg_{j'}(A)+1)> \CumCost_{r}(\lastSeg_r(A))$. Deviating to machine $j$ will result in $0$ excess and since the agent has the highest priority will pay the total cost of the machine. But since this agent was already paying the total cost of another machine and the cost of machine $j$ is the highest the deviation is not profitable.
\end{itemize}

\item {\bf Exactly one \dis }
\begin{itemize}
\item  If  $w_r(A)\neq \cp_r(\lastSeg_r(A))-1$, we put the \dis\ to either machine $1$ or $2$ depending on which of them has excess of $0$ (always one of them has) so that the \dis\ pays $0$. We allocate arbitrarily the rest of the agents according to $A$ making sure that the highest priority agents are responsible for non-zero costs, ensuring that the lowest priority agents among them are allocated to machine $r$.

 Since the machine that the \dis\ occupies (machine $1$ or $2$) has excess $0$ the \dis\ pays $0$ and has no incentive to deviate. The same arguments as in the case of only regular agents can be used here in order to show that no regular agent has an incentive to deviate.

\item If $w_r(A)= \cp_r(\lastSeg_r(A))-1$ then we allocate the \dis\ to either machine $1$ or $2$ depending on which of them has excess of $0$; let $f$ be that machine. We first assign the regular agents according to $A$ and then adjust the assignment as follows. Let $j$ be the machine with the maximum total cost $c_j(A)$ according to $A$. If $\CumCost_{f}(\lastSeg_f(A))>c_j(A)$ we change $j$ to be $f$. Then we move one agent from machine $j$ to machine $r$, unless $j=r$, and similarly as above, we make sure that the highest priority agents are responsible for non-zero costs and the lowest priority among them are placed in machine $j$.  

The \dis\ pays $0$ since machine $f$ has either excess $0$ or excess $\cp_{f}(\lastSeg_f(A)-1)$  and therefore the \dis\ has no incentive to deviate. If any agent deviated to some machine $j'\neq j$ would result in a cost $\CumCost_{j'}(\lastSeg_{j'}(A)+1)> \CumCost_{r}(\lastSeg_r(A))$. Deviating to machine $j$ would result in a cost share equal to either $\CumCost_{f}(\lastSeg_f(A))$ if $j=f$ or $c_j(A)$ otherwise. Since we picked $j$ such that the cost share of deviating to be the maximum, the deviation is not profitable.
\end{itemize}

\item {\bf At least two \diss}\\
Let $d$ be the number of \diss\ that arrive and $f$ be either machine $1$ or $2$ as long as it is different from $r$; $f$ has excess of $0$. We allocate the agents according to $A$ and we assign the spots to \diss\ and regular agents as follows:

\begin{itemize}
\item $w_r(A)\neq 1$ and $\ell_r(A)>2$\\
We allocate the \diss\ in pairs, following their priority order from high to low, to machines with increasing $\ve_{j}(\lastSeg_j(A))$, starting from machine $r$; in the case that $d$ is odd, we assign only one \dis\ to machine $f$. If there are more \diss , we allocate one more \dis\ based on the priority order to machine $f$ in the case that $d$ is odd and the rest of the \diss\ are allocated arbitrarily.
\item $\ell_r(A)\leq 2$\\
Similarly as above, we allocate the \diss\ in pairs as above but by ignoring machine $r$. If there are more \diss\ we allocate the next pair\footnote{Note that there would be a pair of \diss\ because if $d$ is odd, we would have allocated odd number of \diss\ so far and if $d$ is even we would have allocated even number of \diss\ so far.} to machine $r$ if $\ell_r(A) = 2$ or the next \dis\ to machine $r$ if $\ell_r(A) =1$. If there are more \diss\ we allocate the next one to machine $f$ if it has only one \dis\ and the rest arbitrarily. 
\item $w_r(A) = 1$ and $\ell_r(A)>2$\\
We start allocating the \diss\ as in the second case with the only difference that we if we allocate a pair of \diss\ in machine $r$, we make sure that the {\em highest} priority pair of \diss\ is allocated in $r$. 
\end{itemize}

We allocate the regular agents arbitrarily making sure that the highest priority regular agents are charged with non-zero cost shares, ensuring that machine $r$ has the lowest priority among them. 

Obviously, agents paying $0$ have no incentive to deviate. 

If an \dis\ deviated to machine $r$ and the first case applies where $w_r(A)\neq 1$ and $\ell_r(A)>2$, then machine $r$ has already the two highest priority \diss\ and therefore the deviating \dis\ would pay either  $\CumCost_{r}(\lastSeg_r(A)+1)$  or $\CumCost_{r}(\lastSeg_r(A))$  both of which are not profitable. If an \dis\ deviated to $r$ and $\ell_r(A)\leq 2$ then either he would be the only \dis\ or $r$ had no regular agent; in both cases the deviating \dis\ would pay $\CumCost_{r}(\lastSeg_r(A))$. If an \dis\ deviated to $r$ and $\ell_r(A)> 2$ but $w_r(A) = 1$, the excess of $r$ would become $2$ and machine either $r$ has already the two highest priority \diss\ or it has no \dis , meaning in both cases that the deviating \dis\ would pay $\CumCost_{r}(\lastSeg_r(A))$.

If an \dis\ deviated to another machine $j$ then the excess increases to $1$. If there was already at least one \dis\ then the deviating \dis\ would pay $\CumCost_{j}(\lastSeg_j(A)+1)>\CumCost_{r}(\lastSeg_r(A))$. If there was no other \dis\ then the deviating \dis\ would pay $\ve_{j}(\lastSeg_j(A))$ but by our assignment this must be higher than the current payment of the \dis. 

Finally if any regular agent with non-zero cost share deviated to $r$, he would be the highest priority agent and pay $\CumCost_{r}(\lastSeg_r(A)+1)$ or $\CumCost_{r}(\lastSeg_r(A))$ and if he deviated to $j\neq r$  he would increase the excess to $1$ and pay $\CumCost_{j}(\lastSeg_j(A)+1)> \CumCost_{r}(\lastSeg_r(A))$. 
 \end{itemize}

\subsection{Proof of Lemma~\ref{lem:1}}
We consider two cases with respect to the expected number of agents.

If $p|\mathcal{N}|\leq 1$, $$ \Prob[d \leq 2] \E_{S\sim\mathbf{p}}[|S| \mid d\leq 2]\leq \E_{S\sim\mathbf{p}}[|S| \mid d\leq 2] \leq p |\mathcal N\setminus D|+2 \leq p  |\mathcal N|+2\leq 3\, .$$

If $p|\mathcal{N}|> 1$,
\begin{eqnarray*}
 \Prob[d \leq 2] \E_{S\sim\mathbf{p}}[|S| \mid d\leq 2]]  &\leq&
 ( \Prob[d = 0]+\Prob[d = 1]+\Prob[d = 2])(p |\mathcal N\setminus D|+2)\\
  &\leq& ((1-p)^{|D|}+(1-p)^{|D|-1}p|D|+(1-p)^{|D|-2}p^2|D|^2)(p|\mathcal N|+2)\\
  &\leq&(1+p|D|+p^2|D|^2)(1-p)^{|D|-2}(p|\mathcal N|+2)\\
  &\leq& 3(p|\mathcal N|)^{2}(1-p)^{|D|-2}(p|\mathcal N|+2)\\
  &\leq& 9(p|\mathcal N|)^{3}(1-p)^{|D|-2}\\
  &\leq& 9(p|\mathcal N|)^{3}(1-p)^{\frac{\log\left(\frac{1}{(|\mathcal N| p)^3}\right)}{\log (1-p)}}\\
  &=& 9(p|\mathcal N|)^{3}\frac{1}{(p |\mathcal N|)^3} = 9.
\end{eqnarray*}

where the last inequality comes from the fact that  $|D|-2=1+\lfloor \frac{\log\left(\frac{1}{(|\mathcal N| p)^3}\right)}{\log (1-p)}\rfloor \geq \frac{\log\left(\frac{1}{(|\mathcal N| p)^3}\right)}{\log (1-p)}$

\subsection{Proof of Theorem~\ref{thm:indpoa}}
Let $S$ be a random set of arriving agents, $G$ be the corresponding game, $Eq(G)$ be the set of Nash equilibria for $G$ and $A$ be the allocation of the \textsc{Delayed-OPT} algorithm for the set $S$. Moreover, let $d=|S\cap D|$ be a random variable depending on $\mathbf{p}=(p_1,p_2,\ldots ,p_{|\mathcal{N}|})$.
Similarly to the proof of Theorem~\ref{thm:iidpoa}, combining both Lemmas~\ref{lem:highdist} and~\ref{lem:lowdist}  we get that the expected ratio of the cost of the worst case equilibrium to the cost of the allocation of the \textsc{Delayed-OPT} algorithm is

\begin{equation}
\E_{S\sim\mathbf{p}}\left[ \frac{\max_{\prof\in\textrm{Eq($G$)}}\hat{C}(\prof)}{C(A)}\right] = \Prob[d \leq 2] \E_{S\sim\mathbf{p}}[|S| \mid d\leq 2] +  \Prob[ d\geq 3] \E_{S\sim\mathbf{p}}[d+3\mid d\geq 3] \, .
\label{eq:ind0}
\end{equation}
 
W.l.o.g. assume $p_1\geq p_2\geq \dots \geq p_{|\mathcal{N}|}$. Then we designate the set of \diss\ $\disSet$ to be the agents associated with the highest probabilities such that $\sum_{i=3}^{|D|} p_i =3\log \tilde{n}$ (we always designate as \diss\ the two agents with the highest probability). We first bound the probability $\Prob[d \leq 2]$ for the case that $\tilde{n}> 1$ as follows.

\begin{eqnarray}
\Prob[d \leq 2] &=& \Prob[d = 0]+\Prob[d = 1]+\Prob[d = 2] \notag\\
&=&  \prod_{i=1}^{|D|} (1- p_i) + \sum_{j=1}^{|D|} p_j \cdot \prod_{i=1, i\neq j}^{|D|} (1- p_i) + \sum_{k=1}^{|D|} p_k \cdot \sum_{j=1,j\neq k}^{|D|} p_j \cdot \prod_{i=1, i\notin \{j,k\}}^{|D|} (1- p_i)\notag\\
&\leq& \prod_{i=3}^{|D|} (1- p_i) \left(1+\sum_{j=1}^{|D|}p_j + \sum_{k=1}^{|D|} p_k \cdot \sum_{j=1,j\neq k}^{|D|} p_j\right)\notag\\
&\leq& \left( \frac{\sum_{i=3}^{|D|}(1-p_i) }{|D|-2}   \right)^{|D|-2} \left(1+\sum_{j=1}^{|\mathcal N|}p_j + \sum_{k=1}^{|\mathcal N|} p_k \cdot \sum_{j=1}^{|\mathcal N|} p_j\right)\notag\\
&\leq& \left( 1-\frac{\sum_{i=3}^{|D|}p_i }{|D|-2}   \right)^{|D|-2} \cdot 3 \tilde{n}^2 = 3\tilde{n}^2\left( 1-\frac{3\log \tilde{n}}{|D|-2}   \right)^{|D|-2}   \notag\\
&\leq& 3\tilde{n}^2e^{-3\log \tilde{n}} = 3\tilde{n}^2\frac 1{\tilde{n}^3} = \frac 3{\tilde{n}} \notag\, ,
\end{eqnarray}
where the second inequality comes from the AM/GM inequality: $\left(\prod_{i=1}^k x_i\right)^{1/k}\leq  1/k\sum_{i=1}^k x_i $, and the final inequality follows from the fact that $(1-x/k)^k\leq e^{-x}$.   As a result, we can bound the first summand of~\eqref{eq:ind0} by $9$ as follows. 

If $\tilde{n}\leq 1$,  then 

\begin{equation}
\Prob[d \leq 2] \E[|S| \mid d \leq 2] \leq \E[|S| \mid d \leq 2] \leq \tilde{n}+2 \leq 3 \, ,
\label{eq:ind1}
\end{equation}

otherwise,

\begin{equation}
 \Prob[d \leq 2] \E[|S| \mid d \leq 2] \leq \frac{3}{\tilde{n}}  (\tilde{n}+2)\leq 9\, .
\label{eq:ind2}
\end{equation}

Similarly to Equation~\eqref{eq:1} we can bound the second summand of ~\eqref{eq:ind0}  as
\begin{equation}
\label{eq:ind3}
\Prob[ d\geq 3]  \E[d+3\mid d\geq 3] \leq \E[d\mid d\geq 0 ] +6 \leq \sum_{i=1}^{|D|} p_i + 6\leq \sum_{i=3}^{|D|} p_i  + 8= 3\log \tilde{n} +8
\end{equation}

Combining Equations~\eqref{eq:ind0},~\eqref{eq:ind1},~\eqref{eq:ind2} and~\eqref{eq:ind3} completes the proof of the theorem.

\end{document}